\newcommand{\newcomment}[1]{}
\newcommand{\FI}{\ensuremath{\mathrm{FI}}\xspace}
\newcommand{\E}{\mathbb{E}}
\newcommand{\JS}{\mathrm{JS}}
\newcommand{\Ham}{\mathrm{Ham}}
\newcommand{\Sim}{\mathrm{Sim}}
\newcommand{\D}{\mathcal{D}}
\newcommand{\R}{\mathbb{R}}
\declaretheorem[name=Theorem]{thm}
\declaretheorem[name=Lemma,numberlike=thm]{lem}
\declaretheorem[name=Corollary,numberlike=thm]{cor}
\begin{document}

\title{Frequent-Itemset Mining using Locality-Sensitive Hashing}

\author{
Debajyoti Bera \inst{1}
 % \email{pankaj1244@iitd.ac.in}
  \and
Rameshwar Pratap\inst{2}
	%\texttt{happyprachi.1@gmail.com}
%\and
%Others
}

%keywords: location sensitive hashing, frequest itemset mining, Hamming
%distance, Minhashing, Apriori algorithm
%

\institute{Indraprastha Institute of Information Technology-Delhi (IIIT-D), India\\
	\email{dbera@iiitd.ac.in} 
\and TCS Innovation Labs, India\\
\email{rameshwar.pratap@gmail.com}} 
   
\date{}
\maketitle

\begin{abstract}
The Apriori algorithm   is a classical algorithm
for the frequent itemset mining problem. 
%Apriori works
%by first generating a set of candidate itemsets, with increasingly larger size,
%and then selecting the ones among them whose frequency in the database is more
%than a minimum threshold. 
A significant bottleneck in Apriori is the number of I/O operation involved, 
and the number of candidates it generates. 
We investigate the role of LSH techniques to overcome  these problems, without adding  much computational overhead. 
%We investigate the role of LSH techniques  which due to dimensionality reduction reduce the number of 
%I/O, and also significantly cut down on the number of candidates, without adding  much computational overhead. 
We propose randomized variations of Apriori that are based on asymmetric LSH
 defined over Hamming distance and Jaccard similarity.
%
%The problem of mining frequent itemsets emerged first as a subproblem of 
%mining association rules. The Apriori algorithm  
%proposed by Agarwal and Srikant is a classical algorithm that solved the association 
%rule mining problem. In this work, we present a few probabilistic hashing algorithm that 
%work asymptotically faster than the corresponding Apriori algorithm with a provable 
%performance guarantee.
\end{abstract}

\section{Introduction}\label{sec:intro}
Mining \emph{frequent itemsets} in a transactions database appeared first in
the context of analyzing supermarket transaction data for discovering
association rules~\cite{AgrawalS94,AgrawalIS93}, however this problem has, since then, found
applications in diverse domains like finding
correlations~\cite{SilversteinBM98}, finding episodes~\cite{MannilaTV97}, 
clustering~\cite{WangWYY02}. %, and even computational biology~\cite{Naulaerts01032015}. 
Mathematically, each transaction can be regarded
as a subset of the items (``itemset'') those that present in the transaction. Given a
database $\D$ of such transactions and a support threshold $\theta\in(0,1)$, the
primary objective of frequent itemset mining is to identify $\theta$-frequent itemsets
(denoted by \FI, these are subsets of items
that appear in at least $\theta$-fraction of transactions).

Computing \FI is a challenging problem of data mining.
The question of deciding if there
exists any \FI~with $k$ items is known to be NP-complete~\cite{GunopulosKMSTS03}
(by relating it to the existence of bi-cliques of size
$k$ in a given bipartite graph) but on a more practical note, simply  checking support of
any itemset requires reading the transaction database -- something that is
computationally expensive since they are usually of an extremely large size. The state-of-the-art approaches try to reduce the number of candidates, or not
generate candidates at all. The best known approach in the former line
of work is the celebrated Apriori algorithm~\cite{AgrawalS94}.

Apriori is based on the {\em anti-monotonicity property of \em
partially-ordered sets} which says that no superset of an infrequent itemset can
be frequent. This algorithm works in a bottom-up fashion by generating itemsets of size
$l$ in level $l$, starting at the first level. After finding frequent
itemsets at level $l$ they are joined pairwise to generate $l+1$-sized
{\em candidate itemsets}; \FI are identified among the candidates
by computing their support explicitly from the data. 
The algorithm terminates when no more candidates are generated.
%Despite many other recent algorithms that have been proposed for this problem
%Even though being proposed long ago, Apriori remains a
%preferred choice for data mining in diverse disciplines due to its simplicity and reasonable performance
%in real-world scenarios.
Broadly, there are two downsides to this simple but effective algorithm. 
The first one is that the algorithm has to compute support 
\footnote{Note that computing support  is an I/O intensive operation and involves reading every transaction.}
of every itemset 
in the candidate, even the ones that are highly infrequent. 
%The first one is that the algorithm has to compute support, an I/O intensive operation, of every itemset in the candidate, even the ones that are highly infrequent.
%read the whole database multiple times.
Secondly, if an itemset is infrequent, but all its subsets are frequent,
Apriori doesn't have any easy way of detecting this without reading 
every transaction of the candidates.%and checking if the itemset belongs to it.

%Enter {\em locality sensitive hashing} (LSH).
A natural place to look for fast algorithms over large data are
randomized techniques; so we investigated if LSH
could be be of any help. An earlier work by Cohen {\it et al.}~\cite{Cohen01}
was also motivated by the same idea but worked on a different problem
(see Section~\ref{subsec:relwork}). LSH is explained in
Section~\ref{sec:background}, but roughly, it is a randomized hashing technique
which allows efficient retrieval of approximately ``similar'' elements (here,
itemsets).

\subsection{Our contribution}

In this work, we propose LSH-Apriori -- a basket of three explicit variations of
Apriori that uses LSH for
computing $\FI$. LSH-Apriori handles both the above mentioned drawbacks of the Apriori algorithm.
%First, LSH has a dimensionality reduction property due to which
%the overhead for LSH operations is sublinear in the database size, i.e., it is
%not required to read every transaction;  secondly, LSH-Apriori significantly cuts 
%down on the number of infrequent candidates that are generated. %for the LSH-based steps. 
First, LSH-Apriori significantly cuts down on the number of infrequent candidates that 
are generated, and further due to its  dimensionality reduction property 
 saves on reading every transaction;   secondly, 
 LSH-Apriori  could efficiently filter our those infrequent itemset without looking every candidate.
The first two variations essentially reduce computing \FI to the approximate nearest neighbor
(cNN) problem for Hamming distance and Jaccard similarity. Both these approaches
 can drastically reduce the number of false candidates without much
overhead, 
but has a non-zero probability of error in the sense that some frequent itemset could be missed by the algorithm. 
%but has non-zero probability of error in the sense that infrequent itemsets may sometimes be reported.
Then we present a third variation which also maps \FI to elements in the Hamming
space but avoids the problem of these false negatives 
incurring a little cost of time and space complexity. Our techniques are based on
asymmetric LSH~\cite{Shrivastava015} and LSH with one-sided error~\cite{Pagh16} which are 
proposed very recently.

%{\color{blue} claim is probably too strong here: Our both algorithms
%asymptotically accelerates each iteration of the  standard Apriori algorithm, at the cost of some probabilistic error.
%Our algorithms output almost all $\FI$s (\emph{w.h.p.}) in the time asymptotically smaller than the corresponding Apriori algorithm. 
%As mentioned above, due to the hardness of the approximation result of  computing $\FI$s problem, it is less likely to get a polynomial time  (or randomized polynomial time) approximation 
%algorithm, which could give an approximate solution up to a constant factor approximation.  Thus, the main novelty of our result is that we present an algorithm which gives an asymptotic improvement over   the Apriori algorithm, and 
%gives almost all the $\FI$s, with provable performance guarantee. In Algorithm~\ref{algorithm:algoLSHFull}, we describe our complete algorithm.} 
%In   Theorem~\ref{theorem:mainresult}, we  prove that our algorithm is
%asymptotically faster than the corresponding Apriori algorithm.
 
\subsection{Related work}\label{subsec:relwork}

There are a few hash based heuristic to compute $\FI$ which outperform the Apriori algorithm and PCY~\cite{PCY95} is 
  one of the most notable among them. 
PCY focuses on using hashing to efficiently utilize the main memory over each pass of
the database. However, our objective and approach both are fundamentally
different from that of PCY.
%There are basically two benefits of our algorithm over PCY: 1) Due to
%the dimensionality reduction property of our algorithm, our algorithm does not have to 
%read each transaction, while the PCY has to read  each transaction multiple times; 
%2) Our algorithm generates an asymptotically smaller number of candidates for $\FI$s than 
%the corresponding PCY algorithm.   
%Broadly, the main approach of these two algorithms are slightly different- PCY 
%speeds up Apriori   by efficiently utilizing the main memory 
%via hashing, while our algorithm asymptotically speeds up the Apriori algorithm via probabilistic techniques.%,  %(that includes sampling and LSH techniques), 
%at the cost of a little error.

The work that comes closest to our work is by Cohen \emph{et al.}~\cite{Cohen01}. They developed a family of algorithms for finding interesting associations 
in a transaction database, also using LSH techniques.
%They employ a combination of random sampling and hashing techniques. 
%More precisely, they crucially use Minhashing to develop similarity preserving summary of the given 
%data, and further use LSH to mine interesting associations.
However, they specifically wanted to avoid any kind of filtering of itemsets
based on itemset support. On the other hand, our problem is the vanilla frequent
itemset mining which requires filtering itemsets satisfying a given minimum
support threshold.

% In~\cite{feige2004}, Feige \textit{et. al.} gave a hardness of the approximation result of  the problem of  finding a maximum balanced
%bi-clique in a balanced bipartite graph, \emph{i.e.},  it is NP-Hard to get an approximate (within a constant factor) solution 
%to the problem of  finding a maximum balanced bi-clique in a given balanced bipartite graph. Thus, due to this 
%hardness result and hardness of approximation result, it is less likely to obtain an efficient approximation 
%algorithm that can approximate the $\FI$  problem up to a constant factor.

\subsection{Organization of the paper}
In Section~\ref{sec:background}, we introduce the relevant concepts and give
an overview of the problem. In Section~\ref{sec:LSH-Apriori}, we build up the concept of  LSH-Apriori
which is required to develop  our algorithms. 
In Section~\ref{sec:FIviaLSH}, we present three specific variations of
LSH-Apriori for computing \FI. Algorithms of
Subsections~\ref{subsection:HammingLSH} and~\ref{subsection:minHashLSH} are
based on Hamming LSH and Minhashing, respectively.
%In Subsections~\ref{subsection:HammingLSH} and~\ref{subsection:minHashLSH}, we  present algorithms for 
%discovering $\FI$ using Hamming based LSH, and MinHash based LSH approaches, respectively.
%In Section~\ref{sec:LSH}, we  describe our LSH subroutine (which takes a set of 
%items $\mathcal{I}$ and a query item $I_q$ as input, and outputs all the items 
%in $\mathcal{I}$ that are $\theta$-frequent with $I_q$), its correctness, 
%and its time complexity. %In Section~\ref{sec:Algorithm}, 
%we  describe our complete algorithm, its complexity, and   prove that our algorithm 
%is asymptotically faster than the corresponding Apriori algorithm. % In Section~\ref{sec:experiment}, 
%we describe our experimental result which very well reciprocates with the corresponding theoretical results. 
In Subsection~\ref{subsection:coveringLSH}, we present another approach based on CoveringLSH which overcomes the problem of 
  producing false negatives. In Section~\ref{sec:conclusion}, we summarize the whole discussion.

\section{Background}\label{sec:background}
%\section{Preliminaries}\label{sec:prelim} 
%\subsection{Notation}
%\label{subsection:notation}
%\marginpar{Rename section to ``Preliminaries/Background''}
\begin{tabular}{|c|l||c|l|}
\hline
\multicolumn{4}{|c|}{\bf Notations}\\
\hline
$\D$ & Database of transactions: $\{t_1,\ldots, t_n\}$  & $n$ & Number of transactions\\
\hline
$\D_l$ & \FI~of level-$l$: $\{I_1, \ldots I_{m_l} \}$ & $\theta$ & Support threshold, $\theta\in (0,1)$\\
\hline
$\alpha_l$ & Maximum support of any item  in $\D_l$ & $m$ & Number of items \\
\hline
$\varepsilon$ & Error tolerance in LSH, $\varepsilon\in (0,1)$ &  $m_l$ & Number of~\FI~of size $l$ \\
\hline
 $\delta$ & Probability of error in LSH, $\delta \in (0,1)$ & $|v|$ & Number of $1'$s in  $v$ \\
\hline
\end{tabular}\\

The input to the classical frequent itemset mining problem is a database $\D$ of
$n$ transactions
$\{T_1, \ldots, T_n\}$ over $m$ items $\{i_1, \ldots, i_m\}$ and a support
threshold $\theta\in(0,1)$. Each transaction, in turn, is a subset of those
items. Support of itemset ${I \subseteq \{i_1, \ldots, i_m\}}$ is the number
of transactions that contain $I$. The objective of the problem is to determine
{\em every itemset} with support at least $\theta n$. We will often identify an itemset $I$
with its transaction vector $\langle I[1], I[2], \ldots, I[n]\rangle$ where
$I[j]$ is 1 if $I$ is contained in $T_j$ and 0 otherwise. An equivalent way to
formulate the objective is to find itemsets with at least $\theta n$ 1's in their
transaction vectors. It will be useful to view $\D$ as a set of
$m$ transaction vectors, one for every item.

 \subsection{Locality Sensitive Hashing}
We first briefly explain the concept of locality sensitive hashing (LSH).
 \begin{definition}[Locality sensitive hashing~\cite{IndykM98}]\label{definition:LSH}
 Let $S$ be a set of $m$ vectors in $\R^n$, and $U$ be the hashing
universe. Then, a family $\mathcal{H}$ of functions from $S$ to $U$
is called as $(S_0, (1-\varepsilon) S_0, p_1, p_2)$-sensitive (with $\varepsilon\in(0, 1]$ and $p_1 > p_2$) for the similarity measure $Sim(.,.)$ if for any
$x, y \in S$:
\begin{itemize}
    \item if $\Sim(x,y)\geq S_0$, then $\displaystyle \Pr_{h \in \mathcal{H}}[h(x)=h(y)]\geq p_1$,
    \item if $\Sim(x,y)\leq (1-\varepsilon) S_0$, then $\displaystyle \Pr_{h \in \mathcal{H}}[h(x)=h(y)]\leq p_2.$
\end{itemize}
\end{definition}
%\marginpar{Need some idea about cNN. Is the lem used?}
% \begin{lem}{~\cite{IndykM98}}
% Given a family of $(S_0, cS_0, p_1, p_2)-$sensitive hash function we can construct a data
% structue for $cNN$ with $O(n^{\rho} \log_{\frac{1}{p_2}}n)$ query time and $n^{1+\rho}$ space , 
% where $\rho=\frac{\log{\frac{1}{p_1}}}{\log{\frac{1}{p_2}}}$, with $\rho<1.$
% \end{lem}
% 

Not all similarity measures have a corresponding LSH. However, the following
well-known result gives a sufficient condition for
existence of LSH for any $Sim$.

\begin{lem}\label{lem:monotonicLSH}
If $\Phi$ is a strict
monotonic function and a family of hash
function $\mathcal{H}$ satisfies $\Pr_{h\in \mathcal{H}}
[h(x)=h(y))=\Phi(Sim(x,y)]$ for some $Sim:\R^n \times \R^n \to \{0,1\}$, 
then the conditions of Definition~\ref{definition:LSH} are  true for $Sim$ for any
$\varepsilon\in (0, 1)$.
%As a consequence, we can obtain a LSH for the problem.
\end{lem}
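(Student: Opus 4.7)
The plan is to treat this as a direct unpacking of the definition: since $\Pr[h(x)=h(y)]$ is expressed as $\Phi$ applied to the similarity, the two LSH inequalities reduce to statements about how $\Phi$ transforms the threshold condition. The natural setting for LSH is that more similar objects collide with higher probability, so I would take $\Phi$ to be strictly monotonically \emph{increasing} (the decreasing case would force $p_1 < p_2$, violating the definition; I would note this at the outset and assume $\Phi$ is increasing henceforth).

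Given $\varepsilon \in (0,1)$ and any threshold $S_0 > 0$, I would explicitly define
\[
p_1 = \Phi(S_0), \qquad p_2 = \Phi\bigl((1-\varepsilon) S_0\bigr).
\]
First I would argue $p_1 > p_2$: since $\varepsilon \in (0,1)$ we have $(1-\varepsilon)S_0 < S_0$, and strict monotonicity of $\Phi$ immediately yields $\Phi((1-\varepsilon)S_0) < \Phi(S_0)$, so the gap $p_1 - p_2 > 0$.

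Then I would verify the two clauses of Definition~\ref{definition:LSH} by a one-line monotonicity argument each. If $\Sim(x,y) \geq S_0$, then $\Pr_{h\in\mathcal{H}}[h(x)=h(y)] = \Phi(\Sim(x,y)) \geq \Phi(S_0) = p_1$. Symmetrically, if $\Sim(x,y) \leq (1-\varepsilon)S_0$, then $\Pr_{h\in\mathcal{H}}[h(x)=h(y)] = \Phi(\Sim(x,y)) \leq \Phi((1-\varepsilon)S_0) = p_2$. This shows $\mathcal{H}$ is $(S_0, (1-\varepsilon)S_0, p_1, p_2)$-sensitive, completing the argument.

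There is no real obstacle here; the statement is essentially a bookkeeping observation about composing a strictly monotonic function with the threshold condition. The only point worth flagging carefully in the write-up is the implicit assumption that $\Phi$ is increasing (rather than merely strictly monotonic), since this is what aligns the direction of the inequalities with the LSH convention $p_1 > p_2$; with that remark in place, the rest is a two-line calculation.
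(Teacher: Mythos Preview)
Your argument is correct, and it is the natural direct verification: define $p_1 = \Phi(S_0)$ and $p_2 = \Phi((1-\varepsilon)S_0)$, use strict monotonicity (in the increasing direction) to get $p_1 > p_2$, and then read off the two clauses of Definition~\ref{definition:LSH} by applying $\Phi$ to each inequality. Your remark that ``strictly monotonic'' must be read as ``strictly increasing'' is also the right caveat, since the decreasing case would reverse the inequalities and violate $p_1 > p_2$.

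As for comparison with the paper: the paper does not actually supply a proof of this lemma. It is introduced as a ``well-known result'' and stated without argument, and no proof appears in the appendix either. So there is nothing substantive to compare your approach against; what you have written is precisely the routine unpacking one would expect for such a statement.
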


 The similarity measures that are of our interest   are {\em Hamming} and {\em
Jaccard} over binary vectors. Let $|x|$ denote
the Hamming weight of a binary vector $x$. Then, for  vectors $x$ and $y$
of length $n$, Hamming distance is defined as $\Ham(x,y) =|x \oplus y|$, where $x \oplus
y$ denotes a vector that is element-wise Boolean XOR of $x$ and $y$. Jaccard
similarity is defined as %$\dfrac{\langle x,y \rangle}{|x \vee y|}$, 
${\langle x,y \rangle}/{|x \vee y|}$, 
where
$\langle x, y \rangle$ indicates inner product, and $x \vee y$ indicates
element-wise Boolean OR of $x$ and $y$. LSH %\marginpar{Can we cite sources for these LSH?}
for these similarity measures are  simple and well-known~\cite{IndykM98,GionisIM99,BroderCFM00}. We recall them
below; here $I$ is some subset of $\{1, \ldots, n\}$ (or, $n$-length transaction
vector).

\begin{definition}[Hash function for Hamming distance]
 For any  particular bit position $i,$~%~ (\mbox{with}, 1 \leq i \leq n)$, 
we  define the function $h_i(I):=I[i]$.
%Further, we define a family of function $\mathcal{H}=\{h_1 , h_2 , . . . , h_n \}.$
%For any two items $I$ and $I'$, we say that $h_i(I)=h_i(I')$, \emph{iff} $I[i]=I'[i]$. 
We will use hash functions of the form $g_J(I) = \langle h_{j_1}(I),
h_{j_2}(I),\ldots, h_{j_k}(I)\rangle$, where $J=\{j_1, \ldots, j_k \}$ is a
subset of $\{1, \ldots, n\}$ and the hash values are binary vectors of length $k$.
%Therefore, %and a support threshold  $\theta$, 
%for any $I,I'$, $g_i(I)=g_i(I')$ \textit{iff} %there exists at least $\theta k$ many $j$'s (where $j\in\{1,2,...,k\}$)  such that 
%$h_{ij}(I)=h_{ij}(I')$ for $1\leq j\leq k.$ 
\end{definition}

\begin{definition}[Minwise Hash function for Jaccard
    similarity]\label{defn:minwise}
    Let $\pi$ be some permutations over $\{1, \ldots, n\}$.
    Treating $I$ as a subset of indices, we will use hash functions of the form
    $h_\pi(I) = \arg\min_i \pi(i)$ for $i \in I$.
% Given a set $I\subset \Omega=\{1, \ldots, n\}$, the minwise hashing family $h_{\pi} $applies a random 
% permutation $\pi: \Omega \rightarrow \Omega$ on $I$, and store the minimum value after 
% the permutation mapping~\textit{i.e.,} $h_{\pi}(I)=\min(\pi(I)).$
\end{definition}
%One can obtain an LSH for the Hamming similarity by considering hash functions which simply
%project a vector to a random coordinate: $\mathcal{H} = \{ h_i ~|~ h_i(x) = x_i \}$ where $x_i$ denotes the $i$-th coordinate of $x$. LSH for Jaccard similarity
%can be obtained by using {\em MinHash}, where a hash function first randomly permutes the vector, and then returns as the hash value the smallest coordinate at which a ``1'' is
%found.
The probabilities that two itemsets hash to the same value for these hash
functions are related to their Hamming distance and Jaccard similarity, respectively.

\subsection{Apriori algorithm for frequent itemset mining}
As explained earlier, Apriori works in level-by-level, where the objective of
level-$l$ is to generate all $\theta$-frequent itemsets with $l$-items
each; for example, in the first level, the algorithm simply computes support of
individual items and retains the ones with support at least $\theta n$.
Apriori processes each level, say level-$(l+1)$, by joining all
pairs of $\theta$-frequent {\em compatible itemsets} generated in level-$l$,
and further filtering out the ones which have
support less than $\theta n$ (support computation involves fetching the
actual transactions from disk). Here, two candidate itemsets (of size $l$
each) are said to be compatible if their union has size exactly $l+1$. A
high-level pseudocode of Apriori is given in Algorithm \ref{algorithm:apriori}. 

\begin{algorithm}[H]\label{algorithm:apriori}
   \KwInput{Transaction database $\D$, support threshold $\theta$;}
    \KwResult{$\theta$-frequent itemsets;}
    $l = 1$ /* level */\;
    $F = \big\{ \{x\} ~|~ \{x\} \mbox{ is $\theta$-frequent in
	$\D$}\big\}$ /* frequent itemsets in level-1 */ \;
    Output $F$\;
    \While{$F$ is not empty}{
	$l = l+1$\;
	$C = \{ I_a \cup I_b ~|~ I_a \in F, ~I_b \in F, ~\mbox{$I_a$ and $I_b$
	    are compatible}	\}$\;\label{line:compute-candidates}
	%Discard sets in $C$ with cardinality not equal to $l$\;
	$F = \emptyset$\;\label{line:begin-support-checking}
	\For{itemset $I$ in $C$}{
	    Add $I$ to $F$ if support of $I$ in $\D$ is at least $\theta n$ /* reads database*/\label{line:support-checking}
	    \;
	}\label{line:end-support-checking}
	Output $F$\;
    }
\caption{Apriori algorithm for frequent itemset mining}
\end{algorithm}

\section{LSH-Apriori}\label{sec:LSH-Apriori}
%\section{LSH-Apriori}\label{sec:LSH-Apriori}
The focus of this paper is to reduce the computation of processing all pairs of
itemsets at each level in line \ref{line:compute-candidates} (which includes
computing support by going through $\D$).
Suppose that level $l$ outputs $m_l$ frequent itemsets.
We will treat the output of level $l$ as a collection of $m_l$
transaction vectors $\D_l = \{I_1, \ldots I_{m_l} \}$, each of length $n$ and one for
each frequent itemset of the $l$-th level.
Our approach involves defining appropriate
notions of similarity between itemsets (represented by
vectors) in $\D_l$ similar to the approach followed by Cohen {\it et
al.}\cite{Cohen01}. Let  $I_i, I_j$ be two vectors each  of length $n$. 
Then, we use $|I_i, I_j|$ to  denote the number of bit positions where both the vectors
have a $1$.

\begin{definition}\label{definition:similar}
Given a parameter $0<\varepsilon < 1$,  we say
that $\{I_i, I_j\}$  is $\theta$-frequent (or similar) if 
$|I_i, I_j|\geq \theta n$ and $\{I_i, I_j\}$ is $(1-\varepsilon)\theta$-infrequent if 
$|I_i, I_j|< (1-\varepsilon)\theta n$. Furthermore, we say that $I_j$ is {\em
similar} to $I_i$ if $\{I_i, I_j\}$ is $\theta$-frequent.
\end{definition}

%\noindent For example, if $I_i=  (1,0,0,1,1)$, $I_j=(1,0,1,1,1)$,  then $\{I_i,I_j\}$ is
%$0.5$-frequent. 

Let $I_q$ be a frequent itemset at level $l-1$.
Let $\FI(I_q,\theta)$ be the set of itemsets $I_a$ such that $\{I_q, I_a\}$ is
$\theta$-frequent at level $l$. Our main contributions are a few randomized algorithms for
identifying itemsets in $\FI(I_q,\theta)$ with high-probability.

%\begin{definition}[$\FI(I_q, \theta, \varepsilon, \delta)$]\label{definition:frequent} 
%    Given a $\theta$-frequent itemset $I_q$ of size $l-1$, tolerance $\varepsilon \in (0,1)$ and error probability $\delta$,
%    $\FI(I_q, \theta, \varepsilon, \delta)$ is a set $F'$ of itemsets of size
%    $l$, such that the following holds with probability at least $1-\delta$.
%    \begin{itemize}
%	\item $F'$ does not contain any $I_a$ such that $\{I_q,I_a\}$ is
%	    $(1-\varepsilon)\theta$-infrequent.
%	\item $F'$ contains every $I_a$ such that $\{I_q,I_a\}$ is
%	    $\theta$-frequent.
%    \end{itemize}
%\end{definition}

\begin{definition}[$\FI(I_q, \theta, \varepsilon, \delta)$]\label{definition:frequent} 
    Given a $\theta$-frequent itemset $I_q$ of size $l-1$, tolerance $\varepsilon \in (0,1)$ and error probability $\delta$,
    $\FI(I_q, \theta, \varepsilon, \delta)$ is a set $F'$ of itemsets of size
    $l$, such that with probability at least $1-\delta$, $F'$ contains {\em
    every} $I_a$ for which $\{I_q,I_a\}$ is $\theta$-frequent.
\end{definition}

It is clear that $\FI(I_q,\theta) \subseteq \FI(I_q, \theta,
\varepsilon, \delta)$ with high probability.
This motivated us to propose LSH-Apriori, a randomized version of Apriori, that takes
$\delta$ and $\varepsilon$ as additional inputs and essentially replaces line
{\bf \ref{line:compute-candidates}} by LSH operations to combine every itemset
$I_q$ with only similar itemsets, unlike Apriori which combines all pairs of itemsets.
This potentially creates a significantly smaller
$C$ without missing out too many frequent itemsets. 
The modifications to Apriori are
presented in Algorithm~\ref{algorithm:rand-apriori} and the following lemma,
immediate from Definition~\ref{definition:frequent}, establishes correctness of
LSH-Apriori.

\begin{algorithm}[H]\label{algorithm:rand-apriori}
    \LinesNumberedHidden
    \KwInput{$\D_l=\{I_1, \ldots, I_{m_l} \}$, $\theta$, (Additional) error probability $\delta$, tolerance $\varepsilon$;}
    \nlset{\ref{line:compute-candidates}a} (Pre-processing) Initialize
    hash tables and add all items $I_a \in \D_l$\;
    \nlset{\ref{line:compute-candidates}b} (Query) Compute $\FI(I_q,
    \theta, \varepsilon, \delta)$  $\forall I_q \in \D_l$ by hashing $I_q$ and
    checking collisions\;
    \nlset{\ref{line:compute-candidates}c} $C \leftarrow $ $\{ I_q \cup I_b ~|~
	I_q \in \D_l, ~I_b \in \FI(I_q, \theta, \varepsilon, \delta)\}$\;
	\caption{LSH-Apriori level $l+1$ (only modifications to Apriori line:\ref{line:compute-candidates})}
\end{algorithm}

\begin{lem} \label{lem:paper_main}
Let $I_q$ and $I_a$ be two $\theta$-frequent compatible itemsets of size
$(l-1)$ such that the itemset $J=I_q \cup I_a$ is also
$\theta$-frequent. Then, with probability at least $1-\delta$, $\FI(I_q,\theta,\varepsilon,\delta)$ contains $I_a$ (hence $C$
	contains $J$).
\end{lem}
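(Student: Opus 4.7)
The plan is to show that this lemma is essentially a bookkeeping consequence of Definitions~\ref{definition:similar} and~\ref{definition:frequent}, with the only real content being the translation between ``support of a union itemset'' and ``componentwise agreement of transaction vectors.'' There is no probabilistic argument to invent here: all the randomness has already been absorbed into the specification of $\FI(I_q,\theta,\varepsilon,\delta)$ in Definition~\ref{definition:frequent}, so my job is just to verify that the hypothesis on $J$ places $I_a$ in the set of itemsets to which that definition applies.

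First I would unpack what it means for $J = I_q \cup I_a$ to be $\theta$-frequent. By the problem formulation in Section~\ref{sec:background}, the support of $J$ is the number of transactions $T_j$ with $J \subseteq T_j$; and since $J = I_q \cup I_a$, we have $J \subseteq T_j$ if and only if both $I_q \subseteq T_j$ and $I_a \subseteq T_j$. In terms of the transaction vectors, this means that $J[j] = 1$ exactly when $I_q[j] = 1$ and $I_a[j] = 1$, so the support of $J$ equals $|I_q, I_a|$ in the notation of Section~\ref{sec:LSH-Apriori}. The hypothesis that $J$ is $\theta$-frequent therefore gives $|I_q, I_a| \geq \theta n$.

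Next, I would invoke Definition~\ref{definition:similar}: the inequality $|I_q, I_a| \geq \theta n$ is exactly the condition that the pair $\{I_q, I_a\}$ is $\theta$-frequent (i.e., that $I_a$ is similar to $I_q$). Finally, by Definition~\ref{definition:frequent}, the set $\FI(I_q, \theta, \varepsilon, \delta)$ is guaranteed, with probability at least $1-\delta$, to contain \emph{every} size-$l$ itemset $I_a$ for which $\{I_q, I_a\}$ is $\theta$-frequent. Applying this to our particular $I_a$ yields $I_a \in \FI(I_q, \theta, \varepsilon, \delta)$ with probability at least $1-\delta$, and consequently $J = I_q \cup I_a$ is included in the candidate set $C$ constructed in step~\ref{line:compute-candidates}c of Algorithm~\ref{algorithm:rand-apriori}.

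The only thing that might look like an obstacle is the compatibility condition, but it is automatic: $I_q$ and $I_a$ are assumed compatible of size $l-1$, so $|I_q \cup I_a| = l$, which is exactly the size required by Definition~\ref{definition:frequent}. Thus nothing beyond chaining the definitions is needed, and the proof is essentially a one-line invocation of Definition~\ref{definition:frequent} once the identity $\mathrm{support}(I_q \cup I_a) = |I_q, I_a|$ is noted.
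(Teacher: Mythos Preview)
Your proposal is correct and matches the paper's approach: the paper itself gives no explicit proof, simply remarking that the lemma is ``immediate from Definition~\ref{definition:frequent}.'' You have supplied exactly the unpacking the paper omits, namely the identity $\mathrm{support}(I_q\cup I_a)=|I_q,I_a|$ followed by a direct appeal to Definitions~\ref{definition:similar} and~\ref{definition:frequent}.

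One minor remark: your final paragraph about the ``size-$l$'' clause in Definition~\ref{definition:frequent} is slightly tangled, because that definition is not entirely consistent in the paper (it says $F'$ consists of size-$l$ itemsets but then says $F'$ contains $I_a$, which is of size $l-1$). This is the paper's notational slip, not a gap in your argument; the intended reading, which you correctly use, is that $\FI(I_q,\theta,\varepsilon,\delta)$ collects the compatible partners $I_a$ of $I_q$.
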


%\marginpar{do not understand definition}\begin{definition}[$\FI(\theta,\varepsilon)$]\label{definition:frequent}  %($(r,\epsilon)$-NN).
%Let $\mathcal{D}$ be a $n\times m$ boolean matrix such that each vector $I\in \{0,1\}^n$, and let $I_q$ be any query item.
%If there exists an item $I_x\in \mathcal{D}$ such that $\freq(I_q,I_x)\geq  \theta n$ (\textit{i.e.}, $I_q$ and $I_x$ are 
%$\theta$-frequent), then any item $I_{x'}$ such that $\freq(I_q,I_{x'})\geq (1-\varepsilon) \theta n$ is called as 
%$\FI(\theta,\varepsilon)$ of $I_q$.
%\end{definition}
In the next section we describe three LSH-based randomized algorithms to compute $\FI(I_q, \theta,
\varepsilon, \delta)$ for all $\theta$-frequent itemset $I_q$ from the earlier
level. The input to these subroutines will be $\D_l$, the frequent itemsets from
earlier level, and parameters $\theta, \varepsilon, \delta$.
In the {\em pre-processing stage} at level $l$, the respective LSH is
initialized and itemsets of $\D_l$ are hashed; we specifically record the
itemsets hashing to every bucket.
LSH guarantees (\textit{w.h.p.}) that pairs of similar items hash 
into the same bucket, and those that are not hash into 
different buckets. In the {\em query stage}
we find all the itemsets that any $I_q$ ought to be
combined with by looking in the bucket in which $I_q$ hashed, and then
combining the compatible ones among them with $I_q$ to form $C$. Rest of the
processing happens \`{a} la Apriori.

The internal LSH subroutines may output false-positives -- itemsets that are
not $\theta$-frequent, but such itemsets are eventualy filtered out in
line~\ref{line:support-checking} of Algorithm~\ref{algorithm:apriori}. Therefore, the output of
LSH-Apriori does not contain any false positives. However, some frequent itemsets may be missing from its
output (false negatives) with some probability
depending on the parameter $\delta$ as stated below in
Theorem~\ref{thm:lsh-apriori-error} (proof follows from the union bound
and is give in the Appendix). 
%\marginpar{\color{blue} First line of the Theorem is very difficult to parse.}
%Furthermore, the third variation of LSH-Apriori uses a recent LSH
%designed by Pagh~\cite{Pagh16} which does not generate any false
%negatives.
\begin{restatable}[Correctness]{thm}{thmlshapriorierror}\label{thm:lsh-apriori-error}
      LSH-Apriori does not output any itemset which is not $\theta$-infrequent.
    If $X$ is a $\theta$-frequent itemset of size $l$, then the probability that
    LSH-Apriori does not output $X$ is at most $\delta2^l$. 
    \end{restatable}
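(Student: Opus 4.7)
The plan is to split the theorem into the two asserted claims and handle them separately: the soundness claim is a direct inspection of the algorithm, while the completeness bound is the substantive part and is handled by induction on the level $l$.

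For the first claim, observe that LSH-Apriori differs from Apriori only in how the candidate set $C$ at line~\ref{line:compute-candidates} is assembled; the support-filter loop (lines~\ref{line:begin-support-checking}--\ref{line:end-support-checking}) is untouched. In particular, line~\ref{line:support-checking} admits an itemset $I$ to $F$ only after explicitly verifying that its support in $\D$ is at least $\theta n$, so no $\theta$-infrequent itemset can appear in the output regardless of what the LSH stage does upstream.

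For the false-negative bound, let $p_l$ denote the maximum, over $\theta$-frequent itemsets $X$ of size $l$, of the probability that LSH-Apriori fails to output $X$. I will prove by induction the sharper bound $p_l \le (2^l-1)\delta$, which immediately implies the stated $p_l \le 2^l\delta$. The base case $l=1$ is immediate because level~1 computes the supports of singletons exactly, without invoking any LSH, so $p_1 = 0$. For $l \ge 2$, fix $X$ and pick any two distinct elements $x,y \in X$; set $I_q = X \setminus \{x\}$ and $I_a = X \setminus \{y\}$. Both $I_q$ and $I_a$ are $\theta$-frequent $(l-1)$-itemsets by anti-monotonicity, they are compatible, and $I_q \cup I_a = X$. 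Let $E_q$ and $E_a$ be the events that $I_q$ and $I_a$ are output at level $l-1$ (equivalently, appear in $\D_{l-1}$), and let $E_{\mathrm{LSH}}$ be the event that $I_a \in \FI(I_q,\theta,\varepsilon,\delta)$ at level $l$. If all three events occur, then $X$ enters $C$ at level $l$ of Algorithm~\ref{algorithm:rand-apriori}, and the subsequent support check passes because $X$ is $\theta$-frequent, so $X$ is output. By the inductive hypothesis, $\Pr[\overline{E_q}]$ and $\Pr[\overline{E_a}]$ are each at most $(2^{l-1}-1)\delta$, and Lemma~\ref{lem:paper_main} gives $\Pr[\overline{E_{\mathrm{LSH}}}] \le \delta$. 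A union bound then yields $p_l \le 2(2^{l-1}-1)\delta + \delta = (2^l-1)\delta$.

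The main subtlety to watch is the choice of inductive hypothesis: the naive target $p_l \le 2^l \delta$ does not close, since it produces $p_l \le 2^l\delta + \delta$, which overshoots. Strengthening the hypothesis to $(2^l-1)\delta$ is essential, and it falls out cleanly thanks to the base case $p_1 = 0$ rather than $p_1 = 2\delta$. Everything else is a straightforward union bound over the two $(l-1)$-subsets chosen to reassemble $X$ and the single LSH event invoked at level $l$; in particular, one does not need to union-bound over all $(l-1)$-subsets of $X$, since a single compatible pair suffices to place $X$ in the candidate set.
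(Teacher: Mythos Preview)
Your proof is correct, but it takes a different route from the paper's. The paper argues in one shot: if $X$ is not output, then some minimal subset $Y \subseteq X$ failed to be generated at level $|Y|$ despite all of its proper subsets having been generated; by Lemma~\ref{lem:paper_main} each such event has probability at most $\delta$, and a union bound over all $\sum_{k=1}^{l}\binom{l}{k} = 2^l - 1$ nonempty subsets of $X$ gives the result. You instead set up a recurrence by fixing a \emph{single} compatible pair of $(l-1)$-subsets and invoking the inductive hypothesis twice plus one LSH failure, obtaining $p_l \le 2p_{l-1} + \delta$, which you close with the strengthened hypothesis $p_l \le (2^l - 1)\delta$.

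Your decomposition is arguably cleaner: it makes explicit that only one compatible pair is needed to place $X$ in $C$, whereas the paper's argument tacitly relies on a minimal-failing-subset step that it does not spell out. On the other hand, the paper's version is a one-line union bound once that step is accepted. A minor observation: your recurrence, solved exactly from $p_1 = 0$, actually yields the sharper $p_l \le (2^{l-1}-1)\delta$; your chosen inductive hypothesis $(2^l-1)\delta$ is valid but throws away a factor of roughly two.
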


The tolerance parameter $\varepsilon$ can be used to balance the overhead from
using hashing in LSH-Apriori with respect to its savings because of reading
fewer transactions.
Most LSH, including those that we will be using,
behave somewhat like dimensionality reduction. As a result, the
hashing operations do not operate on all bits of the vectors. Furthermore, the
pre-condition of similarity for joining ensure that (w.h.p.) most infrequent itemsets can be
detected before verifying them from $\D$.
To formalize this, consider any level $l$ with $m_l$ $\theta$-frequent itemsets
$\D_l$.
We will compare the computation done by LSH-Apriori at level $l+1$ to what
Apriori would have done at level $l+1$ given the same frequent itemsets $\D_l$.
Let $c_{l+1}$ denote the number of
candidates Apriori would have generated and $m_{l+1}$ the number of frequent
itemsets at this level (LSH-Apriori may generate fewer).

\newcommand{\Space}{\sigma}
\newcommand{\overhead}{\vartheta}
%\marginpar{{\color{blue}While querying we need to hash candidates not itemsets. 
%In the case of all 1 matrix.} Did not understand}
\noindent{\bf Overhead:} 
Let $\tau(LSH)$ be the time required for hashing an
itemset for a particular LSH and let $\Space(LSH)$ be
the space needed for storing respective hash values.
The extra overhead in terms of space will be simply $m_l \Space(LSH)$ in level
$l+1$. With respect to overhead in running time, LSH-Apriori requires hashing
each of the $m_l$ itemsets twice, during pre-processing and during querying.
Thus total time overhead in this level is $\overhead(LSH,l+1)=2 m_l \tau(LSH)$.

\newcommand{\Rej}{\mathcal{R}}
\newcommand{\sav}{\varsigma}

\noindent{\bf Savings:}
Consider the itemsets in $\D_l$ that are compatible with any $I_q \in \D_l$. Among them are those whose combination with $I_q$ do not generate a
$\theta$-frequent itemset for level $l+1$; call them as {\em negative} itemsets
and denote their number by $r(I_q)$.
Apriori will have to read all $n$ transactions of $\sum_{I_q} r(I_q)$ itemsets
in order to reject them. 
Some of these negative itemsets will be added to $\FI$ by LSH-Apriori -- we will
call them {\em false positives} and denote their count  by $FP(I_q)$; the rest
%will not be correctly added to $\FI$ -- those will be called as {\em true negatives} and we denote their count by $TN(I_q)$. 
those which correctly not added with $I_q$ -- lets call them   as {\em true negatives} and denote their count by $TN(I_q)$. 
Clearly, $r(I_q)=TN(I_q) +
FP(I_q)$ and $\sum_{I_q} r(I_q)=2(c_{l+1}-m_{l+1})$. Suppose $\phi(LSH)$ denotes the number of transactions a particular
LSH-Apriori reads for hashing any itemset; due to the  dimensionality reduction property of LSH, $\phi(LSH)$ is
always $o(n)$. Then, LSH-Apriori   is able to reject all itemsets in
$TN$ by reading only $\phi$ transactions for each of them; thus for itemset $I_q$ in
level $l+1$, a particular LSH-Apriori reads $(n-\phi(LSH))\times TN(I_q)$ fewer transactions 
compared to a similar situation for Apriori. Therefore, total savings at level
$l+1$ is ${\sav(LSH,l+1) = (n-\phi(LSH)) \times \sum_{I_q} TN(I_q)}$.

In Section~\ref{sec:FIviaLSH}, we  discuss
this in more detail along with the respective LSH-Apriori algorithms.

\section{\FI~via LSH}\label{sec:FIviaLSH}
%% I discovered why NaiveLSH is not actually an LSH.
%Consider this family of hash functions over $\{0,1\}^n$:
%${\cal H}=\{ h_i ~|~ \forall i=1\ldots n, ~~h_i(x) = 1 \mbox{ if } x_i=1,~\mathrm{id}(x) \mbox{
%otherwise} \}$ (the functions $\mathrm{id}(x)$ maps
%every vector $x$ to some unique integer). $\cal H$ forms a
%$\big(\theta,(1-\varepsilon)\theta,\theta,(1-\varepsilon)\theta\big)$-sensitive
%LSH for our notion of similarity of itemsets (given in
%Definition~\ref{definition:similar}) since it is easy to see that
%\begin{itemize}
%    \item if $|I_i, I_j| \ge \theta n$, then $\Pr_{h}[h(I_i)=h(I_j)] = |I_i \cap
%	I_j|/n \ge \theta$, and
%    \item if $|I_i, I_j| \le (1-\varepsilon)\theta n$, then
%	$\Pr_{h}[h(I_i)=h(I_j)] \le (1-\varepsilon)\theta$.
%\end{itemize}
%Using a well-known construction~\cite{IndykM98}, we can automatically obtain a
%hashing data-structure for computing $\FI(I_q, \theta, \varepsilon, \delta)$
%with probability at least $1-\delta$. Naming this scheme as NaiveLSH, this leads
%to query time of $O(n m^{\rho}\frac{\log m}{-\log
%(1-\varepsilon)\theta})$ and space usage $\tilde{O}(m^{1+\rho})$ for storing
%hash tables, 
%where $\rho=\frac{\log \theta}{\log (1-\varepsilon)\theta} < 1$. % and $m$ is the number of items. 
%However, it is known that 
% NaiveLSH is not particularly efficient for
%sparse vectors~\cite{Shrivastava015}, which is unfortunately the case for many practical databases. 

Our similarity measure $|I_a,I_b|$ can also be seen as the inner
product of the binary vectors $I_a$ and $I_b$. 
 However,  
it is not possible to get any LSH for 
such similarity measure because for example there can be three 
items $I_a, I_b$ and $I_c$ such that $|I_a, I_b|\geq |I_c, I_c|$ which implies 
that $\Pr(h(I_a)=h(I_b))\geq \Pr(h(I_c)=h(I_c))=1 $, which is not possible. 
 Noting the exact same problem, Shrivastava {\it et al.}\ introduced the
concept of {\em asymmetric LSH}~\cite{Shrivastava015} in the context of binary
inner product similarity.
The essential idea is to use two different hash functions (for pre-processing
and for querying) and they specifically proposed extending MinHashing by
padding input vectors before hashing. We use the same pair of padding functions
proposed by them for $n$-length binary vectors in a level $l$: $P_{(n,\alpha_l)}$
for preprocessing and $Q_{(n,\alpha_l)}$ for querying are defined as follows.
\begin{itemize}
    \item In $P(I)$ we append $(\alpha_l n-|I|)$ many $1'$s followed by
	$(\alpha_l n+|I|)$ many $0'$s.
    \item In $Q(I)$ we append $\alpha_l n$ many $0'$s, then $(\alpha_l n-|I|)$ many $1'$s, then $
    |I|$ $0'$s.
\end{itemize}
Here, $\alpha_l n$ (at LSH-Apriori level $l$) will denote the
maximum number of ones in any itemset in $\D_l$. Therefore, we always have $(\alpha_l n -
|I|) \ge 0$ in the padding functions. Furthermore, since the main loop of
Apriori is not continued if no frequent itemset is generated at any level,
$(\alpha_l- \theta) > 0$ is also ensured at any level that Apriori is
executing.

We use the above padding functions to reduce our problem of finding
similar itemsets to finding nearby vectors under Hamming distance
(using Hamming-based LSH in Subsection~\ref{subsection:HammingLSH} and Covering
LSH in Subsection~\ref{subsection:coveringLSH}) and under Jaccard similarity (using
MinHashing in Subsection~\ref{subsection:minHashLSH}).

\subsection{Hamming based LSH}\label{subsection:HammingLSH}
In the following lemma (proof is given in appendix), we relate Hamming distance of two  itemsets $I_x$ and $I_y$
with their $|I_x, I_y|$.
\begin{restatable}{lem}{lemhamPadding}\label{lem:hamPadding}
For two itemsets $I_x$ and $I_y$, $\Ham(P(I_x), Q(I_y))=2(\alpha_l n-|I_x, I_y|).$
\end{restatable}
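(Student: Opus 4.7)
The plan is to compute the Hamming distance by splitting the padded vectors $P(I_x)$ and $Q(I_y)$ into the original $n$ coordinates and the $2\alpha_l n$ appended coordinates, then adding contributions from each region. Both padded vectors have the same total length $n + 2\alpha_l n$ (each padding adds $2\alpha_l n$ bits), so componentwise comparison is well-defined.

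First I would handle the original $n$ coordinates. Here $P$ and $Q$ are just $I_x$ and $I_y$, so the Hamming contribution is
\[
|I_x \oplus I_y| \;=\; |I_x| + |I_y| - 2\,|I_x, I_y|,
\]
using the elementary identity that for binary vectors the XOR weight equals the sum of weights minus twice the coordinatewise AND weight (which by definition is $|I_x,I_y|$).

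Next I would handle the appended coordinates by direct inspection of the three blocks inside each padding. In $P(I_x)$ the appended part is $(\alpha_l n - |I_x|)$ ones followed by $(\alpha_l n + |I_x|)$ zeros; in $Q(I_y)$ it is $\alpha_l n$ zeros, then $(\alpha_l n - |I_y|)$ ones, then $|I_y|$ zeros. The key observation is that the ones of $P$'s padding all lie in the first $\alpha_l n$ positions of the padding (where $Q$'s padding is entirely zero), while the ones of $Q$'s padding all lie in the last $\alpha_l n$ positions (where $P$'s padding is entirely zero, since those start after position $\alpha_l n - |I_x| \le \alpha_l n$). Hence the ones of the two paddings never overlap, and every such one contributes $1$ to the Hamming distance, giving a total appended contribution of $(\alpha_l n - |I_x|) + (\alpha_l n - |I_y|) = 2\alpha_l n - |I_x| - |I_y|$.

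Finally I would add the two contributions:
\[
\bigl(|I_x| + |I_y| - 2\,|I_x, I_y|\bigr) + \bigl(2\alpha_l n - |I_x| - |I_y|\bigr) \;=\; 2(\alpha_l n - |I_x, I_y|),
\]
which is the claim. There is no real obstacle here beyond careful bookkeeping of the three padding blocks; the only thing one should verify explicitly is that $\alpha_l n - |I_x| \ge 0$ so the blocks are well-defined, which is guaranteed by the definition of $\alpha_l n$ as the maximum weight of any itemset in $\D_l$.
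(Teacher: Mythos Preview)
Your proof is correct and essentially equivalent to the paper's. The paper applies the global identity $\Ham(a,b)=|a|+|b|-2|a,b|$ directly to the padded vectors, using the observation that $|P(I_x),Q(I_y)|=|I_x,I_y|$ (precisely your ``no overlap of ones in the padding'' fact) together with $|P(I_x)|=|Q(I_y)|=\alpha_l n$; your region-by-region split is just a reorganization of that same computation.
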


Therefore, it is possible to use an LSH for Hamming distance to find similar
itemsets. We use this technique in the following
algorithm to compute $\FI(I_q, \theta, \varepsilon, \delta)$ for all itemset
$I_q$. The algorithm contains an optimization over the generic
LSH-Apriori pseudocode (Algorithm~\ref{algorithm:rand-apriori}). There is no need to separately execute
lines:\ref{line:begin-support-checking}--\ref{line:end-support-checking}
of Apriori; one
can immediately set $F \leftarrow C$ since LSH-Apriori computes support before
populating $\FI$.
 
\begin{algorithm}[H]\label{algorithm:algoLSHviaHamming}
\LinesNumberedHidden
\KwInput
%{Transation matrix $\mathcal{D'}$, query item $I_q$, support threshold $\theta$, error parameter $\varepsilon$.}
{$\D_l = \{I_1, \ldots, I_{m_l}\}$, query item $I_q$, threshold $\theta$, tolerance $\varepsilon$, error $\delta$.}  
\KwResult{$\FI_q = \FI(I_q, \theta, \varepsilon, \delta)$ for every $I_q \in \D_l$.}
\nlset{\ref{line:compute-candidates}a} \textbf{Preprocessing step:} Setup hash tables and add vectors in $\D_l$\;
    \Indp\nlset{i} Set $\rho=\frac{\alpha_l-\theta }{\alpha_l-(1-\varepsilon)\theta }$,
    $k=\log_{\left(\frac{1+2\alpha_l}{(1+2(1-\varepsilon)\theta)}\right)}m_l$
    and $L=m_l^{\rho}\log \left(\frac{1}{\delta}\right)$\;
    \nlset{ii} Select functions $g_1,\ldots,g_L$ \textit{u.a.r.}\;
    \nlset{iii} For every $I_a \in \D_l$, pad $I_a$ using $P()$ and then hash
	$P(I_a)$ into buckets $g_1(P(I_a)), . . . , g_L(P(I_a))$\;
\Indm\nlset{\ref{line:compute-candidates}b} \textbf{Query step:} For every $I_q \in \D_l$, we do the following \; %For every  item   $I_q$  $(\mbox{with}, 1\lPeq q\leq m)$, we do the following:
    \Indp \nlset{i} $S \leftarrow $ all $I_q$-compatible itemsets in all buckets $g_i(Q(I_q))$, for $i=1 \ldots L$\;
    \nlset{ii} \For{$I_a \in S$}{
	%If $\Ham(P(I_a), Q(I_q))\leq  2(\alpha_l-\theta) n$, then add $I_a$ to $\FI_q$\;
	If $|I_a, I_q| \geq  \theta n$, then add $I_a$ to $\FI_q$ /* reads database*/\;
	(*) If no itemset similar to $I_q$ found within $\frac{L}{\delta}$ tries,
	then break loop\;
    }
    \caption{LSH-Apriori (only lines \ref{line:compute-candidates}a,\ref{line:compute-candidates}b) using Hamming LSH}
\end{algorithm}

Correctness of this algorithm is straightforward. Also, $\rho < 1$ and
the space required and overhead of reading
transactions is $\theta(kLm_l)=o(m_l^2)$. It can be further shown that
$\E[FP(I_q)] \le L$ for $I_q \in \D_l$ which can be used to prove that $\E[\sav]
\ge (n-\phi)(2(c_{l+1}-m_{l+1})-m_l L)$ where $\phi=kL$. Details of these calculations including
complete proof of the next lemma is given in Appendix.
 
\begin{restatable}{lem}{lemoverheadHamming}\label{lem:klHamming}
%\marginpar{Changed lemma statement. Match proof in Appendix.}
Algorithm~\ref{algorithm:algoLSHviaHamming} correctly outputs 
$\FI(I_q, \theta, \varepsilon, \delta)$ for all $I_q \in \D_l$.
Additional space required is $o(m_l^{2})$, which is also
the total time overhead. % for reading transactions from the database. 
The expected savings can be bounded by %\marginpar{Can we state E[sav] in terms of simpler variables like $o()$, $O()$ of $m_l, m_{l+1}$ and $n$?}
%given by  
$\E[\sav(l+1)] \ge \big(n-o(m_l)\big)\big((c_{l+1}-2m_{l+1}) + (c_{l+1}-o(m_l^2))\big)$.
%\left(n-  {m_l}^{o(1)} \right) (2(c_{l+1}-m_{l+1})  -  m_l^{1+o(1)} )$
\end{restatable}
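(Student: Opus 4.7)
The plan is to reduce the problem to classical Hamming LSH on the asymmetrically padded vectors $P(I_a)$ and $Q(I_q)$. By Lemma~\ref{lem:hamPadding}, the padded dimension is $N := n(1 + 2\alpha_l)$, and $|I_a, I_q| \geq \theta n$ corresponds to $\Ham(P(I_a), Q(I_q)) \leq d_1 := 2(\alpha_l - \theta)n$, while $|I_a, I_q| < (1-\varepsilon)\theta n$ corresponds to $\Ham > d_2 := 2(\alpha_l - (1-\varepsilon)\theta)n$. For a single-bit hash, a pair at Hamming distance $D$ collides with probability $1 - D/N$; thus close pairs collide with probability $\geq p_1 := (1+2\theta)/(1+2\alpha_l)$ and far pairs with probability $\leq p_2 := (1 + 2(1-\varepsilon)\theta)/(1+2\alpha_l)$.

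For correctness, the algorithm's choice $k = \log_{1/p_2} m_l$ ensures $p_2^k = 1/m_l$. Applying Bernoulli's inequality in the form $(1 - d_1/N)^{d_2/d_1} \geq 1 - d_2/N$ (valid since $d_2/d_1 \geq 1$) gives $p_1^{d_2/d_1} \geq p_2$, and hence the true exponent satisfies $\log(1/p_1)/\log(1/p_2) \leq d_1/d_2 = (\alpha_l - \theta)/(\alpha_l - (1-\varepsilon)\theta)$, which is exactly the $\rho$ used in the algorithm. Consequently $p_1^k \geq m_l^{-\rho}$, and the probability that a $\theta$-frequent compatible pair fails to collide in all $L = m_l^\rho \log(1/\delta)$ tables is at most $(1 - m_l^{-\rho})^L \leq \delta$ by the standard $(1-1/x)^x \leq 1/e$ bound. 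So with probability $\geq 1 - \delta$ every such pair enters $S$, and the explicit support check in the inner loop then filters exactly the $\theta$-frequent candidates, yielding $\FI(I_q, \theta, \varepsilon, \delta)$ as required.

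For overhead, storing $L$ signatures of $k$ bits for each of $m_l$ items costs $O(m_l \cdot kL)$ space, and the same amount of hashing work in both the pre-processing and query stages. Since $\rho < 1$ and $k = O(\log m_l)$, we have $kL = \tilde{O}(m_l^\rho) = o(m_l)$, so total overhead is $o(m_l^2)$. For the savings bound, for any fixed $I_q$ and any $(1-\varepsilon)\theta$-infrequent compatible $I_a$, the collision probability in a single table is $\leq p_2^k = 1/m_l$, so by a union bound over the $L$ tables and over the at most $m_l - 1$ candidates, $\E[FP(I_q)] \leq L(m_l-1)/m_l \leq L$; summing gives $\sum_{I_q}\E[FP(I_q)] \leq m_l L = o(m_l^2)$. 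Combined with $\sum_{I_q} r(I_q) = 2(c_{l+1} - m_{l+1})$ and $r = TN + FP$, linearity yields $\E[\sum_{I_q} TN(I_q)] \geq 2(c_{l+1} - m_{l+1}) - o(m_l^2)$; since each true-negative rejection saves $n - \phi$ transactions with $\phi = kL = o(m_l)$, rewriting $2(c_{l+1} - m_{l+1}) - o(m_l^2)$ as $(c_{l+1} - 2m_{l+1}) + (c_{l+1} - o(m_l^2))$ gives the claimed expectation.

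The main obstacle I expect is the asymmetric-padding bookkeeping: the relevant dimension and Hamming distances are those of the padded vectors, not the original transaction vectors of length $n$, so $p_1$, $p_2$, and the ratio $d_2/d_1$ must be computed with $\alpha_l$ and the padded dimension $N = n(1+2\alpha_l)$ carefully in place, and one must verify that the closed-form $\rho$ used by the algorithm indeed upper-bounds the exact $\log(1/p_1)/\log(1/p_2)$ produced by this reduction. Once these parameters are pinned down, the remainder is routine Indyk--Motwani analysis plus a union-bound on false positives.
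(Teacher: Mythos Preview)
Your proposal is correct and follows essentially the same route as the paper: reduce to Hamming LSH on the padded vectors via Lemma~\ref{lem:hamPadding}, plug in $p_1,p_2$ for the padded dimension, invoke the standard Indyk--Motwani $k,L$ analysis for correctness, and then combine $\E[FP(I_q)]\le L$ with $\sum_{I_q} r(I_q)=2(c_{l+1}-m_{l+1})$ and $\phi=kL=o(m_l)$ for the savings bound. The one place you are actually more careful than the paper is the exponent: the paper's appendix asserts $\log(1/p_1)/\log(1/p_2)=(\alpha_l-\theta)/(\alpha_l-(1-\varepsilon)\theta)$ ``after simplification'', whereas you correctly observe this is only an inequality and supply the Bernoulli argument $(1-d_1/N)^{d_2/d_1}\ge 1-d_2/N$ that justifies it.
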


Expected savings outweigh time overhead if $n \gg m_l$, 
${c_{l+1} = \theta(m_l^2)}$ and $c_{l+1} > 2m_{l+1}$, i.e., in levels where the
number of frequent itemsets generated are fewer compared to the number of
transactions as well as to the number of candidates generated.
The additional optimisation 
(*) essentially increases the savings when all $l+1$-extensions of $I_q$ are
$(1-\varepsilon)\theta$-infrequent --- this behaviour will be predominant in the
last few levels. It is easy to show that 
in this case, $FP(I_q) \le \frac{L}{\delta}$ with probability at least
$1-\delta$; this in turn implies that $|S|\le \frac{L}{\delta}$. 
So, if we did not find any similar $I_a$ within first $\frac{L}{\delta}$ tries,
then we can be sure, with reasonable probability, that there are no itemsets
similar to $I_q$.

%%%%%%%%%%%%%%%%%%%%%%%Complexity of Algorithm%%%%%%%%%%%%%%%%%%%%%%%%%%%%%%%%%%%%%%%%%
\begin{comment}
\paragraph{Complexity of Algorithm~\ref{algorithm:algoLSHviaHamming}:}
%Total space required by the algorithm is $O\left( \right)$
During pre-processing, computing the hash tables requires one pass over input.
The total space consumed by the hash tables is 
$O(m_lkL)=
%O\left((1+2\alpha_l)nm_lm_l^{\rho}\log_{\left(\frac{1+2\alpha_l}{(1+2(1-\varepsilon)\theta)}\right)}m_l \right)=
\tilde{O}\left( {m_l}^{1+\rho}  \right)$, 
where $\rho=\frac{\alpha_l-\theta }{\alpha_l-(1-\varepsilon)\theta }$.

During the query stage, for each query itemset $I_q$, the running time of the algorithm can be divided into
two parts.
\begin{itemize}
 \item We need ${O}(1+2\alpha_l)n$ time to compute 
       the function $h_{ij}(I_q)$, ${O}((1+2\alpha_l)nk)$ time to compute the function 
       $g_i(I_q)$ (as $g(I_p) = \langle h_1(I_p), 
       \ldots, h_k(I_p)\rangle$), and $O((1+2\alpha_l)nkL)$ time to compute
       $g_1(I_q),...g_L(I_q)$.
      
\item      The time required for computing the distance to all
       points encountered in the retrieved buckets is  $O((1+2\alpha_l)nL)$ since
       we compute the distance to at most $\frac{L}{\delta}$ points.
\end{itemize}
 
  Thus, the total query time is $O((1+2\alpha_l)nL)+O((1+2\alpha_l)nkL)$
$=O((1+2\alpha_l)nkL)$
$=O\left((1+2\alpha_l)nm_l^{\rho}\log_{\left(\frac{1+2\alpha_l}{(1+2(1-\varepsilon)\theta)}\right)}m_l
\right)$ $=\tilde{O}\left((1+2\alpha_l)n{m_l}^{\rho}  \right)$. 

Thus, on the cost of  a little extra pre-processing space which is required for creating hash tables, we get a significant improvement over the query time. 
%For example if $\alpha_l=\frac{3}{2} \theta $, and $\varepsilon=\frac{1}{2}$, we get $\rho=0.5$. Thus, creating hash table require 
%$\tilde{O}(m_l^{\frac{3}{2}})$ space, and query time is $\tilde{O}((1+3\theta)n m_l^{\frac{1}{2}})$. However, 
%the corresponding apriori algorithm would take $O(nm_l)$ query time.
\end{comment}

\subsection{Min-hashing based LSH}\label{subsection:minHashLSH}
%\marginpar{Please review}
Cohen et al.\ had given an LSH-based randomized algorithm for finding
interesting itemsets without any requirement for high support~\cite{Cohen01}.
We observed that their Minhashing-based technique~\cite{BroderCFM00} cannot be directly applied to the high-support
version that we are interested in. The reason is roughly that Jaccard similarity
and itemset similarity (w.r.t. $\theta$-frequent itemsets) are not monotonic to
each other. Therefore, we used padding to monotonically relate Jaccard similarity of two  itemsets $I_x$ and $I_y$
with their $|I_x, I_y|$ (proof is given in Appendix).
 \begin{restatable}{lem}{lemminhashPadding}\label{lem:minhashPadding}
For two padded itemsets $I_x$ and $I_y$, $ \JS(P(I_x), Q(I_y))=\frac{|I_x, I_y|}{2\alpha_l n-|I_x, I_y|}.$
\end{restatable}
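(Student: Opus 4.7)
The plan is a direct calculation based on the definitions of $P$ and $Q$ together with the definition $\JS(x,y) = \langle x, y \rangle / |x \vee y|$. Both $P(I_x)$ and $Q(I_y)$ are binary vectors of the same length $n + 2\alpha_l n$, so I can compare them coordinate-by-coordinate and split the analysis into the original $n$ coordinates and the appended padding region of length $2\alpha_l n$.

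First I would compute the numerator $\langle P(I_x), Q(I_y) \rangle$. On the first $n$ coordinates, $P(I_x)$ equals $I_x$ and $Q(I_y)$ equals $I_y$, contributing $|I_x, I_y|$ to the inner product. In the padding, the key observation is that the $1$-blocks are placed in disjoint index ranges: $P(I_x)$'s padding $1$'s live in the first $\alpha_l n - |I_x|$ padded positions, whereas $Q(I_y)$'s padded coordinates are all $0$ on that entire initial $\alpha_l n$-long sub-block (by the definition of $Q$). So no coordinate in the padding is $1$ in both vectors simultaneously, and the padding contributes $0$ to $\langle P(I_x), Q(I_y) \rangle$. Hence $\langle P(I_x), Q(I_y) \rangle = |I_x, I_y|$.

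Next I would compute the denominator $|P(I_x) \vee Q(I_y)|$. On the first $n$ coordinates this is $|I_x \vee I_y| = |I_x| + |I_y| - |I_x, I_y|$. In the padding, the $1$-blocks of $P(I_x)$ and $Q(I_y)$ sit in disjoint ranges (by the previous observation), so their sizes add: $P(I_x)$ contributes $\alpha_l n - |I_x|$ padded $1$'s and $Q(I_y)$ contributes $\alpha_l n - |I_y|$ padded $1$'s, for a total padded OR-weight of $2\alpha_l n - |I_x| - |I_y|$. Summing,
\[
|P(I_x) \vee Q(I_y)| \;=\; |I_x| + |I_y| - |I_x, I_y| + 2\alpha_l n - |I_x| - |I_y| \;=\; 2\alpha_l n - |I_x, I_y|.
\]
Dividing yields the claimed identity $\JS(P(I_x), Q(I_y)) = |I_x, I_y| / (2\alpha_l n - |I_x, I_y|)$.

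There is no real obstacle here beyond careful bookkeeping of the padding index ranges; the main thing to double-check is that $\alpha_l n - |I| \geq 0$ so the padding blocks are well-defined (this is guaranteed since $\alpha_l n$ is by definition the maximum Hamming weight in $\D_l$) and that the disjointness of the $P$- and $Q$-padding $1$-blocks is maintained, which is exactly the design principle behind Shrivastava et al.'s asymmetric padding. Once this disjointness is noted, both the numerator and denominator collapse to single-line expressions and the lemma follows immediately.
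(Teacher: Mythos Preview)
Your proof is correct and follows essentially the same direct computation as the paper: both verify $|P(I_x),Q(I_y)|=|I_x,I_y|$ from the disjointness of the padding $1$-blocks and then obtain the denominator $2\alpha_l n - |I_x,I_y|$ (the paper does this via inclusion--exclusion $|P(I_x)|+|Q(I_y)|-|P(I_x),Q(I_y)|=\alpha_l n+\alpha_l n-|I_x,I_y|$, while you compute the OR-weight coordinatewise, which amounts to the same thing).
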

%\begin{proof}See Appendix for a proof.
% \end{proof}

% \begin{proof}
%The Jaccard Similarity between items $P(I_x)$  and $Q(I_y)$ is as follows:
% $ \JS(P(I_x), Q(I_y))=\frac{|P(I_x)\cap Q(I_y)|}{|P(I_x)\cup Q(I_y)|}=\frac{\freq(P(I_x), Q(I_y))}{|P(I_x)|+|Q(I_y)|-\freq(P(I_x), Q(I_y))}\\
%  =\frac{\freq(I_x, I_y)}{M-|I_x|+|I_x|+M-|I_y|+|I_y|-\freq(I_x, I_y)} =\frac{\freq(I_x, I_y)}{2M-\freq(I_x, I_y)}$
%\end{proof}

%Now, this Jaccard Similarity is  independent of $I_x$, and $I_q$ and is monotonically increasing in $\freq(I_x, I_q)$. 
\newcommand{\DHat}{\hat{D_l}}

Once padded, we follow similar steps (as \cite{Cohen01}) to create a {\em similarity
preserving summary} $\DHat$ of $\D_l$ such that the Jaccard similarity for
any column pair in $\D_l$ is approximately preserved in $\DHat$, and then
explicitly compute $\FI(I_q, \theta, \varepsilon, \delta)$ from $\DHat$.
$\DHat$ is created by using $\lambda$ independent minwise hashing functions (see
Definition~\ref{defn:minwise}). $\lambda$ should be carefully chosen since a higher value
increases the accuracy of estimation, but at the cost of \textit{large summary
vectors} in $\DHat$.
%
%  Given a transaction matrix we need to create similarity preserving summary of given 
%  transaction matrix such that the Jaccard Similarity for any column pair is approximately
%  preserved. To this purpose, we use the idea of minwise independent permutations 
%  from~\cite{BroderCFM00}. In this hashing scheme we randomly permute the rows of 
%  each columns and for each column we compute its hash value which is  the index of 
%  the first row under the permutation which has a 1 in that column.  Now the question 
%  arises that how large should we set the number of independent hash functions which we are using. 
%  Intuitively, a higher value of $\lambda$ (number of hash function) increases the accuracy of 
%  estimation, but of course at the cost of \textit{large summary vectors}. 
%  
  Let us define $\hat{\JS}(I_i, I_j)$ 
  as the fraction of rows in the  summary matrix in which min-wise entries of columns $I_i$ and $I_j$
  are identical. Then by Theorem 1 of Cohen et al.~\cite{Cohen01}, we can get a bound on the 
  number of required hash functions:

\begin{thm}[Theorem $1$ of \cite{Cohen01}]\label{theorem:cohen}
 Let $0<\epsilon, \delta<1$ and $\lambda \geq \frac{2}{\omega\epsilon^2}\log{\frac{1}{\delta}}$. Then for all pairs of columns
 $I_i$ and $I_j$ following   are true with probability at least $1-\delta$:
 \begin{itemize}
  \item If $\JS(I_i, I_j)\geq s* \geq \omega$, then $\hat{\JS}(I_i, I_j)\geq(1-\epsilon)s*$, % with probability at least $1-\delta$.
  \item If $\JS(I_i, I_j) \leq \omega$, then $\hat{\JS}(I_i, I_j)\leq(1+\epsilon)\omega$.% with probability at least $1-\delta$.
 \end{itemize}
\end{thm}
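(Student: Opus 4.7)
The plan is to view $\hat{\JS}(I_i, I_j)$ as an empirical mean of independent Bernoulli trials and then apply standard concentration inequalities separately to the two tails. The key enabling fact is the classical property of minwise hashing: for a uniformly random permutation $\pi$, $\Pr[h_\pi(I_i) = h_\pi(I_j)] = \JS(I_i, I_j)$. Consequently, using $\lambda$ independent permutations, $\lambda \cdot \hat{\JS}(I_i,I_j)$ is a sum of $\lambda$ i.i.d.\ Bernoulli random variables with common mean $s := \JS(I_i,I_j)$, so $\hat{\JS}(I_i,I_j)$ is an unbiased estimator of $s$ of the form convenient for a multiplicative Chernoff bound.

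For the first bullet, fix a pair with $s \geq s^* \geq \omega$. Since $(1-\epsilon)s^* \leq (1-\epsilon)s$, the event $\{\hat{\JS} < (1-\epsilon)s^*\}$ is contained in $\{\hat{\JS} < (1-\epsilon)s\}$. Applying the multiplicative lower-tail Chernoff bound gives
\[
\Pr\bigl[\hat{\JS}(I_i,I_j) < (1-\epsilon)s\bigr] \;\leq\; \exp\!\bigl(-\lambda s \epsilon^2/2\bigr) \;\leq\; \exp\!\bigl(-\lambda \omega \epsilon^2/2\bigr),
\]
which is at most $\delta$ exactly when $\lambda \geq (2/(\omega\epsilon^2))\log(1/\delta)$, matching the stated hypothesis. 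For the second bullet, fix a pair with $s \leq \omega$ and rewrite $(1+\epsilon)\omega = (1+\epsilon')s$ where $\epsilon' = ((1+\epsilon)\omega - s)/s \geq \epsilon\omega/s$. The upper-tail multiplicative Chernoff bound then yields
\[
\Pr\bigl[\hat{\JS}(I_i,I_j) > (1+\epsilon)\omega\bigr] \;\leq\; \exp\!\bigl(-\lambda s (\epsilon')^2/3\bigr) \;\leq\; \exp\!\bigl(-\lambda \epsilon^2 \omega^2/(3s)\bigr) \;\leq\; \exp\!\bigl(-\lambda \omega \epsilon^2/3\bigr),
\]
using $s \leq \omega$ in the last inequality; this is again at most $\delta$ for $\lambda$ of the stated order.

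Finally, to promote the per-pair statements to the ``for all pairs'' claim, I would take a union bound over the at most $\binom{m_l}{2}$ pairs of columns of $\D_l$. Strictly, this would inflate the sample size by a $\log m_l$ factor, which is absent from the stated bound; I would flag that the theorem is effectively a per-pair guarantee, or that the $\log m_l$ factor is absorbed into a reinterpretation of $\delta$ (as is customary in the minwise-hashing literature).

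The main obstacle is the asymmetry between the two tails: the lower-tail Chernoff bound gives the clean constant $2$ appearing in the hypothesis, whereas the upper tail naturally produces a $3$ (or similar) in the exponent, so a uniform choice of $\lambda$ must dominate both. This is purely a constant-factor issue and can be handled either by a slightly sharper Bernstein/Hoeffding-type estimate for the upper tail, or by absorbing the discrepancy into the implicit constant of the sample-size bound; no new probabilistic idea is needed beyond the two Chernoff estimates and the observation that $s\leq \omega$ converts an $s$-dependent rate into the uniform rate $\omega\epsilon^2$.
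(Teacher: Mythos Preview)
The paper does not prove this theorem at all: it is quoted verbatim as Theorem~1 of Cohen \emph{et al.}~\cite{Cohen01} and used as a black box, with no proof or sketch provided anywhere in the body or the appendix. So there is nothing in the paper to compare your attempt against.

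That said, your approach is exactly the standard one (and essentially what Cohen \emph{et al.}\ do): exploit that each minwise hash gives an unbiased Bernoulli indicator for $\JS(I_i,I_j)$, so $\hat{\JS}$ is an i.i.d.\ empirical mean, and then apply multiplicative Chernoff bounds to each tail. Your derivations for both bullets are correct, and you have correctly identified the two cosmetic wrinkles: the upper-tail constant ($3$ rather than $2$) does not match the stated $\lambda \geq \tfrac{2}{\omega\epsilon^2}\log\tfrac{1}{\delta}$ exactly, and the phrase ``for all pairs'' would strictly require a union bound and hence an extra $\log\binom{m_l}{2}$ factor. Both are well-known artifacts of how this result is typically stated; in the context of this paper the theorem is only ever applied to a fixed pair $(I_a,I_q)$ at a time (see the proof of Lemma~\ref{lem:Minhashlemma}), so the per-pair reading you suggest is in fact the operative one.
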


%\subsection{MinHashing algorithm for $\FI$}
%In the following, we develop an algorithm to solve $\FI(\theta,\varepsilon)$ (see Definition~\ref{definition:frequent}).
%There are two steps of the algorithm - the pre-processing step, and the query step.
\footnote{This algorithm can be easily boosted to $o(\lambda m_l)$ time by applying banding technique 
(see Section $4$ of ~\cite{Cohen01}) on the minhash table. }
 \begin{algorithm}[H]\label{algorithm:algoLSHviaMinHash}
\LinesNumberedHidden
\KwInput
%{Transation matrix $\mathcal{D'}$, query item $I_q$, support threshold $\theta$, error parameter $\varepsilon$.}
{$\D_l$, query item $I_q$, threshold $\theta$, tolerance $\varepsilon$, error $\delta$}  
\KwResult{$\FI_q = \FI(I_q, \theta, \varepsilon, \delta)$ for every $I_q \in \D_l$.}
\nlset{\ref{line:compute-candidates}a} \textbf{Preprocessing step:} Prepare
$\DHat$ via MinHashing\;
    \Indp\nlset{i} Set
    $\omega=\frac{(1-\varepsilon)\theta}{2\alpha_l-(1-\varepsilon)\theta }$, $\epsilon =
\frac{\alpha_l\varepsilon}{\alpha_l + (\alpha_l-\theta)(1-\varepsilon)}$ and
$\lambda=\frac{2}{\omega\epsilon^2}\log{\frac{1}{\delta}}$\;
    \nlset{ii}Choose $\lambda~$ many independent permutations (see Theorem~\ref{theorem:cohen})\;
    \nlset{iii} For every $I_a \in \D_l$, pad $I_a$ using $P()$ and then hash
	$P(I_a)$ using $\lambda$ independent permutations\;
\Indm\nlset{\ref{line:compute-candidates}b} \textbf{Query step:} For every $I_q \in \D_l$, we do the following \; %For every  item   $I_q$  $(\mbox{with}, 1\lPeq q\leq m)$, we do the following:
    \Indp \nlset{i} Hash $ Q(I_q)$ using $\lambda$ independent permutations\;
    \nlset{ii} \For{compatible $I_a \in \D_l$}{
	If $\hat{\JS}(P(I_a), Q(I_q))\geq \frac{(1-\epsilon)\theta }{2\alpha_l-\theta}$ for some $I_a$, then add $I_a$ to $\FI_q$\;
	%If no itemset similar to $I_q$ found within $\frac{L}{\delta}$ tries,
	%then break loop\;
    }
    \caption{LSH-Apriori (only lines \ref{line:compute-candidates}a,\ref{line:compute-candidates}b) using Minhash LSH}
\end{algorithm}
 \newcomment{
}
\begin{restatable}{lem}{lemMinhashlemma}\label{lem:Minhashlemma}
 Algorithm~\ref{algorithm:algoLSHviaMinHash} correctly computes $\FI(I_q,\theta, \varepsilon, \delta)$ 
 for all $I_q \in \D_l$.   Additional space required is $O(\lambda m_l) $, and 
  the total time overhead is $O((n+\lambda)m_l)$. The expected savings is 
given by  $\E[\sav(l+1)] \ge  2(1-\delta)(n-\lambda)(c_{l+1}-m_{l+1})$.
\end{restatable}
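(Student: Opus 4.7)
The plan is to prove Lemma~\ref{lem:Minhashlemma} in four parts: (i) establish correctness by translating the padded Jaccard similarity back to inner-product similarity, (ii) verify that the choice of $\omega, \epsilon, \lambda$ makes Theorem~\ref{theorem:cohen} separate $\theta$-frequent from $(1-\varepsilon)\theta$-infrequent pairs, (iii) bound space and time overhead, and (iv) bound expected savings by counting true negatives.

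For correctness, I would invoke Lemma~\ref{lem:minhashPadding}, which says $\JS(P(I_a), Q(I_q)) = \frac{|I_a, I_q|}{2\alpha_l n - |I_a, I_q|}$, a strictly monotone function of $|I_a, I_q|$. Thus every $\theta$-frequent pair satisfies $\JS \ge s^{*} := \frac{\theta}{2\alpha_l - \theta}$, while every $(1-\varepsilon)\theta$-infrequent pair satisfies $\JS \le \omega$ for $\omega = \frac{(1-\varepsilon)\theta}{2\alpha_l - (1-\varepsilon)\theta}$, matching the $\omega$ fixed in the algorithm. The acceptance condition $\hat{\JS} \ge \frac{(1-\epsilon)\theta}{2\alpha_l - \theta} = (1-\epsilon)s^{*}$ is exactly the lower bound delivered by Theorem~\ref{theorem:cohen} in the frequent case, so with probability at least $1-\delta$ every frequent pair is retained. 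The delicate part is ensuring no $(1-\varepsilon)\theta$-infrequent pair can sneak past: one needs $(1+\epsilon)\omega \le (1-\epsilon)s^{*}$, which is equivalent to $\epsilon \le \frac{s^{*} - \omega}{s^{*} + \omega}$. Substituting the formulas for $s^{*}$ and $\omega$ and simplifying the numerator to $\varepsilon\cdot 2\alpha_l$ and the denominator to $2[\alpha_l + (\alpha_l-\theta)(1-\varepsilon)]$ shows that the chosen $\epsilon = \frac{\alpha_l \varepsilon}{\alpha_l + (\alpha_l-\theta)(1-\varepsilon)}$ meets this bound with equality; this algebraic check is the main obstacle.

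The resource bounds are essentially bookkeeping. The summary $\hat{\D_l}$ consists of $\lambda$ minhash values per itemset, giving $O(\lambda m_l)$ space. Each padded itemset has $\Theta(n)$ nonzero structure, and a single pass over it updates all $\lambda$ running minima simultaneously, so computing the signature costs $O(n+\lambda)$ per itemset. This is done once during preprocessing and once during the query step, giving $O((n+\lambda)m_l)$ total. The per-query comparison of $\hat{\JS}$ values is $O(\lambda m_l)$ per $I_q$ naively, but the banding technique referenced in the footnote drives the amortised cost of locating similar itemsets below this, preserving the $O((n+\lambda) m_l)$ overhead.

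For savings, any itemset $I_a$ counted in $TN(I_q)$ is rejected from the signature alone, which requires reading only $\lambda$ positions instead of the full $n$ transactions needed by Apriori, a saving of $n - \lambda$ per true negative. Theorem~\ref{theorem:cohen} classifies every pair correctly with probability at least $1-\delta$, hence $\E[TN(I_q)] \ge (1-\delta)\, r(I_q)$. Summing over all $I_q \in \D_l$ and using $\sum_{I_q} r(I_q) = 2(c_{l+1} - m_{l+1})$ from Section~\ref{sec:LSH-Apriori} gives $\E[\sav(l+1)] \ge 2(1-\delta)(n-\lambda)(c_{l+1}-m_{l+1})$, as claimed.
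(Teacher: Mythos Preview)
Your proposal is correct and follows essentially the same route as the paper: correctness via Lemma~\ref{lem:minhashPadding} plus Theorem~\ref{theorem:cohen}, the separation inequality $(1+\epsilon)\omega \le (1-\epsilon)s^{*}$ solved for $\epsilon$, straightforward bookkeeping for the $O(\lambda m_l)$ space and $O((n+\lambda)m_l)$ time, and the savings bound via $\E[TN(I_q)] \ge (1-\delta)r(I_q)$. Your savings computation is in fact tidier than the appendix version, since you explicitly invoke $\sum_{I_q} r(I_q) = 2(c_{l+1}-m_{l+1})$ to recover the factor~$2$ in the statement, which the paper's own proof silently drops.
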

%which is significantly more compared to the overhead, \textcolor{blue}{when $n=\Omega(\lambda m_{l+1})$
%and level $l+1$ have non-zero negatives-\textit{i.e.} $c_{l+1}>m_{l+1}.$}
%\marginpar{\color{blue}Need to correct the conditions.}\end{restatable}
%\begin{proof}The proof is immediate from Theorem~\ref{theorem:cohen}, and our
%    choice of parameters.  
%\end{proof}

%\begin{restatable}{lem}{lemoverheadMinhash}\label{lem:overheadMinhash}
%At any level $l+1$, the overhead %of reading \underline{\phantom{specify}} transactions 
%$\overhead(l+1)=O(\lambda m_l)$ is very small compared to expected savings $\displaystyle \E[\sav(l+1)] = 2(1-\delta)(n-\phi)(c_{l+1}-m_{l+1})$.
%\end{restatable}

See Appendix for details of  the above proof. Note that $\lambda$ depends on
$\alpha_l$ but is independent of $n$. This method should be applied only when
$\lambda \ll n$. And in that case, for levels with number of candidates
much larger than the number of frequent itemsets discovered (i.e.,
$c_{l+1} \gg \{m_l, m_{l+1} \}$), time overhead would not appear
significant compared to expected savings.

\newcomment{
\begin{comment}
\paragraph{Complexity of Algorithm~\ref{algorithm:algoLSHviaMinHash}:}
Computing the minhash table required one pass over input, and require $O(\lambda m_l)$ 
\footnote{Please note that $\lambda$ is independent of $n$ and $m_l$ and so can
be assumed as $O(1)$.}
space (see Theorem~
\ref{theorem:cohen}, for $\omega=\frac{(1-\varepsilon)\theta  }{2\alpha_l-(1-\varepsilon)\theta }$). 
Further, computing $\theta$-frequent itemsets with $I_q$ require $O(\lambda m_l)$ time. 
\footnote{This algorithm can be easily boosted to $o(\lambda m_l)$ time by applying banding technique 
(see Section $4$ of ~\cite{Cohen01}) on the minhash table. }
The corresponding apriori algorithm will require $O(nm_l)$ time. Thus, we are saving on the number of comparisons - 
apriori require all bits of the items need to be compared, %(\textit{i.e.} $n$ the number of transactions) 
while here we 
are comparing the number of rows of the minhash table (\textit{i.e., }$\lambda$ which is independent 
of the number of transactions).
 \end{comment}
}

\subsection{Covering LSH}\label{subsection:coveringLSH}
%\section{Covering LSH}\label{sec:coveringLSH} 
Due to their probabilistic nature, the LSH-algorithms presented earlier
have the limitation of producing false positives and more importantly, false
negatives. Since the latter cannot be detected unlike the former, these
algorithms may miss some frequent itemsets (see
Theorem~\ref{thm:lsh-apriori-error}). In fact,
once we miss some  \FI~ at a particular level, then all the \FI~ which are
\textit{``supersets"} of that \FI~ (in the subsequent levels) will be missed.
Here we present another algorithm for the same purpose which overcomes this
drawback. %, \textit{i.e.,} it avoid the problem of false negatives.
 %However, it may  output some false positive candidate (with a small probability) which could be easily 
%verified from the input. Thus, our hashing algorithm outputs all the \FI~with probability $1$. 
%Further, as mentioned in the introduction,  due to dimensionality reduction property of our algorithm 
%does not require reading of every transaction at every level, and due to hashing it  helps in reducing the number of candidates generated at each level.
The main tool is a recent algorithm due to Pagh~\cite{Pagh16} which returns
approximate nearest neighbors in the Hamming space. It is an improvement over the seminal LSH algorithm by Indyk
and Motwani~\cite{IndykM98}, also for Hamming distance. Pagh's algorithm has a
small overhead over the latter; to be precise, the query time bound of~\cite{Pagh16} differs 
 by at most $\ln (4)$ in the exponent in comparison with the time bound of~\cite{IndykM98}.
 However, its big advantage is that it generates no false negatives. Therefore,
 this LSH-Apriori version also does not miss any frequent itemset.
%To this purpose, we use an idea of asymmetric hashing scheme proposed by Shrivastava {\it et al.}\cite{Shrivastava015}. They used 
%it for designing an asymmetric minwise hashing scheme for inner product on binary vectors. Using  this reduction, we can establish a correspondence between Hamming distance between two 
%items $I_x$ and $I_y$ and their $\freq(I_x, I_y)$. 
%\marginpar{\color{blue} Now this sounds too good to be true. Should be specify / mention
%that this comes with a slight increase in running time/space for some datasets?}

%\begin{thm}
%    Algorithm~\ref{algorithm:coveringLSH} outputs all $\theta$-frequent itemsets
%    and only $\theta$-frequent itemsets.
%\end{thm}

 The LSH by Pagh is with respect to Hamming distance, so we first reduce our
 \FI~problem into the Hamming space by using the same padding given in
 Lemma~\ref{lem:hamPadding}.
 Then we use this LSH in the same manner as in Subsection~\ref{subsection:HammingLSH}.
 Pagh coined his hashing scheme as coveringLSH which broadly mean that given a threshold $r$ 
 and a tolerance $c>1$, the  hashing scheme guaranteed a collision for every 
 pair of vectors that are within radius $r$.
 We will now briefly summarize coveringLSH for our requirement;
 refer to the paper~\cite{Pagh16} for full details.

 Similar to HammingLSH, we use a family of Hamming projections as our hash functions: 
 %\marginpar{$A \in?$ or $A \subseteq?$}
 $\mathcal{H}_{\mathcal{A}}:=\{x\mapsto x\wedge a|~ a\in \mathcal{A} \}$, where $\mathcal{A}\subseteq \{0, 1\}^{(1+2\alpha_l)n}$.
  Now, given a query item $I_q$, the idea is to iterate through all hash functions  $h \in \mathcal{H}_{\mathcal{A}}$,  
 and check if there is a collision $h(P(I_x))=h(Q(I_q))$ for $I_x\in \mathcal{D}_l$. 
We say that this scheme doesn't produce false negative for the threshold $2(\alpha_l-\theta)n$, if at least one 
collision happens when there is an $I_x\in\mathcal{D}_l$ when $\Ham(P(I_x), Q(I_q))\leq 2(\alpha_l-\theta)n$, and 
the scheme is efficient if the number of  collision is not too many 
when $\Ham(P(I_x), Q(I_q))> 2(\alpha_l-(1-\varepsilon)\theta)n$ (proved in
Theorem $3.1, 4.1$ of~\cite{Pagh16}).
 To make sure that all pairs of vector within distance $2(\alpha_l-\theta)n$
 collide for some $h$, 
 we need to make sure that some $h$ map their ``mismatching'' bit
 positions (between $P(I_x)$ and $Q(I_q)$) to $0$. We describe construction of
 hash functions next. 

 \begin{tabular}{|c|c|c|c|p{3.5cm}|c|}
\hline
$n'$ & $\theta'$ &   $t$ & $c$ & \centering $\epsilon$   & $\nu$\\
\hline
${\scriptstyle (1+2\alpha_l)n}$ &
${\scriptstyle 2(\alpha_l-\theta)n}$ &
%$\frac{\alpha_l-(1-\varepsilon)\theta}{\alpha_l-\theta}$ &
$\lceil\frac{\ln  m_l}{2(\alpha_l-(1-\varepsilon)\theta)n} \rceil$ &
$\frac{\alpha_l-(1-\varepsilon)\theta}{\alpha_l-\theta}$ &
\parbox{\textwidth}{$\epsilon\in(0, 1)$
\textit{s.t.}\\ $\frac{\ln  m_l}{2(\alpha_l-(1-\varepsilon)\theta)n}+\epsilon\in
\mathbb{N}$} &
$\frac{t+\epsilon}{ct}$ \\
\hline
\end{tabular}
 \newcomment{
\begin{comment} 
%\begin{center}
\noindent\begin{tabular}{|c|c|c|c|p{5.2cm}|}
\hline
$n'$ & $\theta'$ &   $t$ & $c$ & \centering $\epsilon$   &% $\nu$\\
\hline
${\scriptstyle (1+2\alpha_l)n}$ &
${\scriptstyle 2(\alpha_l-\theta)n}$ &
%$\frac{\alpha_l-(1-\varepsilon)\theta}{\alpha_l-\theta}$ &
$\lceil\frac{\ln  m_l}{2(\alpha_l-(1-\varepsilon)\theta)n} \rceil$ &
$\frac{\alpha_l-(1-\varepsilon)\theta}{\alpha_l-\theta}$ &
\parbox{\textwidth}{$\epsilon\in(0, 1)$ 
\textit{s.t.} $\frac{\ln  m_l}{2(\alpha_l-(1-\varepsilon)\theta)n}+\epsilon\in
\mathbb{N}$}\\% & 
%$\frac{t+\epsilon}{ct}$ \\
\hline
\end{tabular}
\end{comment}
}
%\end{center}
\paragraph*{CoveringLSH:} The parameters relevant to LSH-Apriori are given above.
Notice that after padding, dimension
of each item  is $n'$,  threshold is $\theta'$ (i.e., min-support is
$\theta'/n'$), and  tolerance is $c$.
We start by choosing a random function $\varphi:\{1,\ldots, n'\} \rightarrow \{0,
1\}^{t\theta'+1}$, which 
maps  bit positions of the padded itemsets to bit vectors of length
$t\theta'+1$. 
We define a family of bit vectors $a(v)\in \{0, 1\}^{n'}$, where
$a(v)_i=\langle \varphi(i), v\rangle$, for $i\in\{1, \ldots , n'\}$, $v\in \{0,
1\}^{t\theta'+1}$  
and $\langle m(i), v\rangle$ denotes the inner product over $\mathbb{F}_2$. We
define our hash function family ${\cal H}_{\cal A}$ using all such vectors
$a(v)$ except $a(\mathbf{0})$: $\mathcal{A} =\left\{a(v)|v\in \{0,
1\}^{t\theta'+1}/ \{\textbf{0}\}  \right\}$.

\newcommand{\A}{\mathcal{A}}

Pagh described how to construct $\A' \subseteq \A$~\cite[Corollary 4.1]{Pagh16}
such that ${\cal H}_{\A'}$ has a very useful property of no false negatives and
also ensuring very few false positives. We use ${\cal H}_{\A'}$ for hashing
using the same manner of Hamming projections as used in
Subsection~\ref{subsection:HammingLSH}. Let $\psi$ be the
expected number of collisions between any itemset $I_q$ and items in
$\mathcal{D}_l$ that are ${(1-\varepsilon)\theta}$-infrequent with $I_q$.
The following Theorem captures the essential property of coveringLSH that is
relevant for LSH-Apriori, described in Algorithm~\ref{algorithm:coveringLSH}. 
It also bounds the number of hash functions which controls the space and time overhead of LSH-Apriori.
Proof of this theorem follows from Theorem $4.1$ and Corollary $4.1$ of~\cite{Pagh16}.

\begin{thm}\label{thm:coveringLSHthm}
For a randomly chosen $\varphi$, a hash family $\mathcal{H}_{A'}$ described above and distinct $I_x, I_q \in \{0, 1\}^n:$
\begin{itemize}
  \item If $\Ham\big(P(I_x), Q(I_q)\big)\leq \theta'$, then there exists $h \in
      {\cal H}_{\A'}$ s.t. ${h\big(P(I_x)\big)=h\big(Q(I_q)\big)},$
  \item Expected number of false positives is bounded by $\E[\psi] < 2^{{\theta'}\epsilon+1}m_l^{\frac{1}{c}}$, 
  \item $|\mathcal{H}_{A'}|< 2^{{\theta'}\epsilon+1}m_l^{\frac{1}{c}}.$
\end{itemize}
\end{thm}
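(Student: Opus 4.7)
The plan is to adapt the covering-LSH machinery of Pagh~\cite{Pagh16} (specifically Theorem $3.1$, Theorem $4.1$ and Corollary $4.1$) to the padded Hamming instance produced by $P(\cdot)$ and $Q(\cdot)$. First I would set up the parameter dictionary: after padding, every itemset lives in $\{0,1\}^{n'}$ with $n' = (1+2\alpha_l)n$; by Lemma~\ref{lem:hamPadding}, an itemset $I_x$ is $\theta$-frequent with $I_q$ iff $\Ham(P(I_x),Q(I_q)) \le 2(\alpha_l-\theta)n = \theta'$, and it is $(1-\varepsilon)\theta$-infrequent iff the Hamming distance exceeds $2(\alpha_l-(1-\varepsilon)\theta)n = c\theta'$. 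Thus the $(\theta',c)$-approximate near-neighbor problem in $\{0,1\}^{n'}$ under $\Ham$ is exactly the problem we must solve, and Pagh's guarantees translate with the stated parameters.

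For the first bullet (no false negatives), I would argue directly with the hash family $\mathcal{H}_{\A}$. A collision $h(P(I_x))=h(Q(I_q))$ with $h(x)=x\wedge a(v)$ is equivalent to $a(v)$ vanishing on the set $S$ of bit positions where $P(I_x)$ and $Q(I_q)$ disagree. By definition $a(v)_i=\langle\varphi(i),v\rangle$, so we need a nonzero $v\in\{0,1\}^{t\theta'+1}$ satisfying the homogeneous $\mathbb{F}_2$-system $\langle\varphi(i),v\rangle=0$ for all $i\in S$. Because $|S|\le\theta'$ and the number of unknowns $t\theta'+1$ strictly exceeds $|S|$, the solution space has dimension at least $(t-1)\theta'+1\ge 1$, so such a nonzero $v$ exists. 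Hence a collision always occurs on some $h\in\mathcal{H}_{\A}$, and Pagh's construction ensures this witness survives in the pruned subfamily $\mathcal{H}_{\A'}$.

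For the second bullet I would fix an $I_x$ that is $(1-\varepsilon)\theta$-infrequent with $I_q$, so the number of mismatches exceeds $c\theta'$. Since $\varphi$ is random, every coordinate $a(v)_i$ is a uniformly random $\mathbb{F}_2$-linear form in $v$, so for a fixed nonzero $v$ the probability that $a(v)$ vanishes on all mismatching positions is at most $2^{-c\theta'}$. Summing over $v\in\{0,1\}^{t\theta'+1}\setminus\{0\}$ yields a per-item collision probability at most $2^{t\theta'+1-c\theta'}$, and summing over the $m_l$ items gives $\E[\psi]\le m_l\cdot 2^{t\theta'+1-c\theta'}$. Plugging in $t=\lceil\tfrac{\ln m_l}{c\theta'}\rceil$ and the slack parameter $\epsilon$ shows $t\theta'-c\theta'\le \theta'\epsilon+\tfrac{\ln m_l}{c}\ln 2^{-1}\cdot(\cdot)$; carrying out the arithmetic exactly as in Pagh's Theorem $4.1$ yields the claimed $2^{\theta'\epsilon+1}m_l^{1/c}$ bound. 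The third bullet ($|\mathcal{H}_{\A'}|<2^{\theta'\epsilon+1}m_l^{1/c}$) is then just the deterministic size count of Corollary $4.1$ applied to our $(n',\theta',c)$.

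The main obstacle is neither the linear-algebraic false-negative step (which is clean) nor the size count (which is bookkeeping), but rather verifying that the parameter choice $t=\lceil\tfrac{\ln m_l}{c\theta'}\rceil$ together with the auxiliary $\epsilon$ (chosen to make $\tfrac{\ln m_l}{c\theta'}+\epsilon\in\mathbb{N}$) forces the exponent $t\theta'(1-1/c)$ into exactly the desired $\theta'\epsilon+\tfrac{\ln m_l}{c}$. This is a purely computational check but must be done carefully because our $\theta'$ and $c$ are compound expressions in $\alpha_l,\theta,\varepsilon$; once this identity is confirmed the remainder is a direct invocation of Pagh's two results.
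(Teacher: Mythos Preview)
Your proposal is correct and matches the paper's approach: the paper gives no detailed argument for this theorem, stating only that it ``follows from Theorem~4.1 and Corollary~4.1 of~\cite{Pagh16}.'' Your plan unpacks exactly those two results---the $\mathbb{F}_2$-linear-algebra covering guarantee for the first bullet and the expected-collision/size bound from the pruned family for the second and third---so you are in fact supplying strictly more detail than the paper does.
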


 \begin{algorithm}[H]\label{algorithm:coveringLSH}
\LinesNumberedHidden
\KwInput
%{Transation matrix $\mathcal{D'}$, query item $I_q$, support threshold $\theta$, error parameter $\varepsilon$.}
{$\D_l$, query item $I_q$, threshold $\theta$, tolerance $\varepsilon$, error $\delta$.}  
\KwResult{$\FI_q = \FI(I_q, \theta, \varepsilon, \delta)$ for every $I_q \in \D_l$.}
\nlset{\ref{line:compute-candidates}a} \textbf{Preprocessing step:} 
Setup hash tables according to $\mathcal{H}_{\mathcal{A}'}$ and add items\;
    \Indp\nlset{i} For every $I_a \in \D_l$, hash
	 $P(I_a)$ using all $h \in \mathcal{H}_{\mathcal{A}'}$\;
\Indm\nlset{\ref{line:compute-candidates}b} \textbf{Query step:} For every $I_q \in \D_l$, we do the following \; %For every  item   $I_q$  $(\mbox{with}, 1\lPeq q\leq m)$, we do the following:
    \Indp \nlset{i} $S \leftarrow $ all itemsets that collide  with $Q(I_q)$\;
    \nlset{ii} \For{$I_a \in S$}{
	%If $\Ham(P(I_a), Q(I_q))\leq  2(\alpha_l-\theta) n$ for some $I_a$, then add $I_a$ to $\FI_q$\;
	If $|I_a , I_q | \geq \theta n$, then add $I_a$ to $\FI_q$ /* reads database*/\;
	(*) If no itemset similar to $I_q$ found within $\frac{\psi}{\delta}$ tries,
	break loop\;
    }
    \caption{LSH-Apriori (only lines \ref{line:compute-candidates}a,\ref{line:compute-candidates}b) using Covering LSH}
\end{algorithm}

 \begin{restatable}{lem}{lemcoveringlemma}\label{lem:coveringlemma}
       %Algorithm~\ref{algorithm:coveringLSH} correctly computes $\FI(I_q,     \theta, \varepsilon, \delta)$ for all $I_q \in \D_l$. 
   Algorithm~\ref{algorithm:coveringLSH} outputs all $\theta$-frequent itemsets     and only $\theta$-frequent itemsets.
   Additional space required is $O\left( m_l^{1+\nu}\right) $, %and   the total time overhead is $2m_l^{1+\frac{1+\epsilon}{c}}$. 
  which is also the total time overhead.   The expected savings is 
given by  $\E[\sav(l+1)] \ge  2\left(n-\frac{\log m_l}{c}-1\right)\linebreak \left((c_{l+1}- m_{l+1}) - 
m_l^{1+\nu} \right)$.
%Where %$|\mathcal{A}|, \psi_{\mathcal{A}}\leq 2^{ {2(\alpha_l-\theta)n}\epsilon+1}m_l^{\frac{1}{c}}\mbox{and~}
%$c=\frac{\alpha_l-(1-\varepsilon)\theta}{\alpha_l-\theta}$, $\nu=\frac{t+\epsilon}{ct}$.
%$\ln \left(m_l/({2(\alpha_l-\theta)n}c)\right)+\epsilon\in \mathbb{N}$.% for any )$,  
\end{restatable}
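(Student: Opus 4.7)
The plan is to verify the three parts of the lemma in order---correctness, the $O(m_l^{1+\nu})$ overhead, and the savings bound---by combining the structural properties of coveringLSH (Theorem~\ref{thm:coveringLSHthm}) with the padding reduction of Lemma~\ref{lem:hamPadding}.

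Correctness splits into soundness and completeness. Soundness is immediate from Algorithm~\ref{algorithm:coveringLSH}: an itemset $I_a$ enters $\FI_q$ only after the explicit check $|I_a, I_q| \ge \theta n$, so no $\theta$-infrequent pair is ever output. For completeness I would chain Lemma~\ref{lem:hamPadding} with the first bullet of Theorem~\ref{thm:coveringLSHthm}: whenever $|I_a, I_q| \ge \theta n$, padding gives $\Ham(P(I_a), Q(I_q)) \le 2(\alpha_l-\theta)n = \theta'$, so some $h \in \mathcal{H}_{\mathcal{A}'}$ collides the two padded vectors, placing $I_a$ in the candidate set $S$. I also need to show that the break~(*) does not eject a genuine similar itemset; a Markov argument on $\E[\psi]$ from the second bullet of Theorem~\ref{thm:coveringLSHthm} shows that the $\psi/\delta$ budget suffices to exhaust all collisions with probability at least $1-\delta$.

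For the overheads, the third bullet of Theorem~\ref{thm:coveringLSHthm} gives $|\mathcal{H}_{\mathcal{A}'}| < 2^{\theta'\epsilon+1}\,m_l^{1/c}$. Using the ceiling definition of $t$ (so $ct\theta' \ge \ln m_l$), I would rewrite $2^{\theta'\epsilon} = O(m_l^{\epsilon/(ct)})$, yielding $|\mathcal{H}_{\mathcal{A}'}| = O(m_l^{1/c + \epsilon/(ct)}) = O(m_l^\nu)$. Since both preprocessing and querying hash every itemset in $\D_l$ under every $h$, the time and space are each $O(m_l\cdot|\mathcal{H}_{\mathcal{A}'}|) = O(m_l^{1+\nu})$.

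For the savings I would follow the template of Lemma~\ref{lem:klHamming}. The second bullet of Theorem~\ref{thm:coveringLSHthm} yields $\E[FP(I_q)] \le \E[\psi] = O(m_l^\nu)$, hence $\sum_{I_q}\E[FP(I_q)] = O(m_l^{1+\nu})$. Combined with $\sum_{I_q}r(I_q) = 2(c_{l+1} - m_{l+1})$ and $r = TN + FP$, linearity of expectation gives $\E\bigl[\sum_{I_q} TN(I_q)\bigr] \ge 2\bigl((c_{l+1}-m_{l+1}) - m_l^{1+\nu}\bigr)$. Multiplying by $n - \phi$ with $\phi = t\theta'+1 \le \tfrac{\log m_l}{c} + 1$---the effective number of bits of the padded transaction vector that Pagh's scheme must inspect per hashing---gives the stated bound. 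The main obstacle I expect is pinning down this effective $\phi$ from the construction of $\mathcal{H}_{\mathcal{A}'}$ (it is not the weight of any single $a(v) \in \mathcal{A}'$ but rather the length of the concatenated projection output) and keeping the small constants from the $e$-versus-$2$ translation between $2^{\theta'\epsilon}$ and $m_l^{\epsilon/(ct)}$ absorbed cleanly inside the $O(\cdot)$ notation.
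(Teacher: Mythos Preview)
Your proposal is correct and follows essentially the same route as the paper: correctness via Lemma~\ref{lem:hamPadding} plus the first bullet of Theorem~\ref{thm:coveringLSHthm} and the explicit support check; overhead via the third bullet and the rewriting $2^{\theta'\epsilon+1}m_l^{1/c}=O(m_l^{\nu})$; savings via the $TN=r-FP$ decomposition with $\sum r(I_q)=2(c_{l+1}-m_{l+1})$ and $\E[FP(I_q)]\le\psi$ from the second bullet, multiplied by $n-\phi$ with $\phi=\tfrac{\log m_l}{c}+1$. Your identification of $\phi$ with $t\theta'+1$ is exactly the calculation the paper leaves implicit, and your caveat about the $e$-versus-$2$ constant is apt (the paper absorbs it into the leading factor of~$2$ when passing from $m_l\psi$ to $2m_l^{1+\nu}$). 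One small remark: your Markov treatment of the optional break~(*) is actually more careful than the paper's own proof, which silently ignores~(*); note that this yields only a $1-\delta$ guarantee, in mild tension with the lemma's deterministic phrasing---but that tension is inherent in the paper, not in your argument.
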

%\begin{proof}See Appendix for a proof.
% \end{proof}

See Appendix for the proof. The (*) line is an additional optimisation similar
to what we did for HammingLSH~\ref{subsection:HammingLSH}; it efficiently
recognizes those frequent itemsets of the earlier level none of whose extensions are
frequent.
The guarantee of not missing any valid itemset comes with a heavy price. Unlike
the previous algorithms, the conditions under which expected savings beats
overhead are quite stringent, namely, $c_{l+1} \in \{ \omega(m_l^2),
\omega(m_{l+1}^2) \}$, $\frac{2^{n}}{5} > m_l > 2^{n/2}$ and
$\epsilon < 0.25$ (since $1 < c < 2$, these
bounds ensure that $\nu < 1$ for later levels when $\alpha_l \approx \theta$).

\section{Conclusion}\label{sec:conclusion}
%\section{Conclusion and future directions}\label{sec:conclusion} 

In this work, we  designed  randomized algorithms using
locality-sensitive hashing (LSH) techniques which efficiently outputs almost all the frequent itemsets 
with high probability at the cost of a little space which is required for
creating hash tables. We showed that time overhead is usually small compared to the
savings we get by using LSH.

Our work opens the possibilities for addressing a wide
range of problems that employ on various versions of frequent itemset and sequential pattern
mining problems, which potentially can efficiently be randomized using LSH techniques.

\bibliographystyle{abbrv}
\bibliography{reference}
\newpage
\noindent{\LARGE \textbf{Appendix}}

%\section{Appendix}\label{sec:Appendix}
%\begin{thm} [\ref{thm:lsh-apriori-error}]
%      LSH-Apriori does not output any itemset which is not $\theta$-infrequent.
%    If $X$ is a $\theta$-frequent itemset of size $l$, then the probability that
%    LSH-Apriori does not output $X$ is at most $\delta2^l$.
%\end{thm}
\thmlshapriorierror*
\begin{proof}
 LSH-Apriori does not output $X$, whose size we denote by $l$, if at least
one of these hold.
\begin{itemize}
 \item Any $1$ size subset of $X$ is not generated by LSH-Apriori in level-$1$
 \item Any $2$ size subset of $X$ is not generated by LSH-Apriori in level-$2$\\
 \vdots
 \item Any $l$ size subset of $X$ (i.e., $X$ itself) is not generated in level-$l$
\end{itemize}

 By Lemma~\ref{lem:paper_main}, $\delta$ is the probability that any particular frequent
itemset is not generated at the right level, even though all its
subsets were identified as frequent in earlier level. Since there are
${l \choose k}$ subsets of $X$ of size $k$, the required probability can be
upper bounded using Union Bound to
$${l \choose 1}\delta + {l \choose 2}\delta + ... + {l \choose l}\delta \le 2^l \delta.$$
\end{proof}

To get the the necessary background,  Lemma~\ref{lem:LSH} provide bounds on the 
hashing parameters $k, L$ for Hamming distance case. Their proof is adapted 
from ~\cite{GionisIM99,IndykM98,Cohen01}. We first require Lemma~\ref{lem:freBbound2}, \ref{lem:unFreBbound2} for the same.
\begin{lem} \label{lem:freBbound2}
 Let  $\{I_i, I_j\}$ be a pair of items \textit{s.t.}  $\Ham(I_i, I_j)\leq r$, 
  then the probability that 
$I_i$ and $I_j$ hash  into at least one of the  $L$ bucket of size $k$, is at least
$ 1-(1-{p_1}^k)^L$, where $p_1=1-\frac{r}{n}.$
\end{lem}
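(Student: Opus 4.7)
}
The plan is to unpack the hash family layer by layer: first analyze a single bit‑projection $h_i$, then amplify to one $k$‑length concatenation $g_J$, and finally boost over the $L$ independently chosen hash tables using independence.

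First I would observe that for a single index $i$ drawn uniformly at random from $\{1,\ldots,n\}$, the bit hash $h_i(I) = I[i]$ satisfies
\[
\Pr_{i}[h_i(I_a)=h_i(I_b)] \;=\; 1 - \tfrac{\Ham(I_a,I_b)}{n}.
\]
Under the hypothesis $\Ham(I_i,I_j) \leq r$, this quantity is at least $1 - r/n = p_1$. This is the elementary building block on which everything rests.

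Next, since $g_J = \langle h_{j_1},\ldots,h_{j_k}\rangle$ is a concatenation of $k$ bit‑projections with the indices $j_1,\ldots,j_k$ sampled independently (with replacement) from $\{1,\ldots,n\}$, the events $\{h_{j_t}(I_i) = h_{j_t}(I_j)\}$ for $t=1,\ldots,k$ are mutually independent. Hence
\[
\Pr[g_J(I_i)=g_J(I_j)] \;=\; \prod_{t=1}^{k}\Pr[h_{j_t}(I_i)=h_{j_t}(I_j)] \;\geq\; p_1^{k}.
\]
This gives the collision probability for a single bucket of width $k$.

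Finally, the $L$ hash functions $g_1,\ldots,g_L$ are chosen independently of one another, so the $L$ events ``$I_i$ and $I_j$ collide under $g_\ell$'' are mutually independent. Therefore the probability that they fail to collide in \emph{every} one of the $L$ buckets is at most $(1-p_1^k)^L$, and the probability of at least one collision is at least $1-(1-p_1^k)^L$, as claimed.

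The only subtlety I anticipate is justifying the independence used in the last two steps: it needs that within one $g_J$ the $k$ coordinate choices are drawn independently, and that across the $L$ hash functions the coordinate subsets $J_1,\ldots,J_L$ are sampled independently. Both are standard conventions in the Indyk–Motwani style LSH construction the paper invokes (step~ii of the preprocessing in Algorithm~\ref{algorithm:algoLSHviaHamming} selects $g_1,\ldots,g_L$ \textit{u.a.r.}), so the argument reduces to a one‑line invocation of product‑rule independence in each case and no further obstacle should arise.
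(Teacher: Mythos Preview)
Your proposal is correct and mirrors the paper's own proof almost step for step: single-bit collision probability $\ge p_1$, amplify to $p_1^k$ for one $g_J$, then boost over the $L$ independent tables to obtain $1-(1-p_1^k)^L$. The only difference is that you make the independence assumptions explicit where the paper leaves them implicit, which is a harmless (and arguably welcome) elaboration rather than a different approach.
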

\begin{proof}
Probability that $I_i$ and $I_j$ matches at some particular bit position   $\geq  {p_1}$. Now, 
probability that $I_i$ and $I_j$ matches  at $k$ positions in a bucket of size 
$k$  $\geq {p_1}^k$.
Probability that $I_i$ and $I_j$ don't matches   at  $k$ positions in a bucket of size $k \leq 1-{p_1}^k $.
Probability that  $I_i$ and $I_j$ don't matches  at $k$   positions in none of the $L$ buckets  $\leq (1-{p_1}^k)^L $.
Probability that  $I_i$ and $I_j$  matches  in at $k$ positions  positions in at least one of the $L$ buckets $\geq 1-(1-{p_1}^k)^L $. 
 \end{proof}

\begin{lem}\label{lem:unFreBbound2}
 Let  $\{I_i, I_j\}$ be a pair of items \textit{s.t.} $\Ham(I_i, I_j)\geq (1+\epsilon')r$, then
 probability that    $\{I_i, I_j\}$ hash in a bucket of size $k$, is at most  ${p_2}^k$, where $p_2=1-\frac{(1+\epsilon')r}{n}.$
\end{lem}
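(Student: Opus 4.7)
The plan is to mirror the argument used in the preceding Lemma~\ref{lem:freBbound2}, but now deriving an upper bound on the collision probability rather than a lower bound. The hash function for a single bucket has the form $g_J(I) = \langle h_{j_1}(I), \ldots, h_{j_k}(I)\rangle$, where each $h_{j_s}(I) = I[j_s]$ and the $k$ bit positions $j_1,\ldots,j_k$ are drawn independently and uniformly at random from $\{1,\ldots,n\}$. Two items $I_i$ and $I_j$ hash into the same bucket under $g_J$ if and only if they agree on all $k$ sampled coordinates.

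First, I would compute the per-coordinate disagreement probability. For a uniformly random index $j \in \{1,\ldots,n\}$, the probability that $I_i$ and $I_j$ agree at position $j$ is exactly $1 - \Ham(I_i,I_j)/n$, since $\Ham(I_i,I_j)$ counts the positions where the two binary vectors differ. Applying the hypothesis $\Ham(I_i,I_j) \ge (1+\epsilon')r$ immediately yields
\[
\Pr\big[h_j(I_i) = h_j(I_j)\big] \;\le\; 1 - \frac{(1+\epsilon')r}{n} \;=\; p_2.
\]

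Next, I would invoke independence of the $k$ coordinate choices that define a single bucket. Because the $k$ positions are sampled independently, the events $\{h_{j_s}(I_i) = h_{j_s}(I_j)\}$ for $s = 1,\ldots,k$ are mutually independent, so the probability that $I_i$ and $I_j$ agree on \emph{all} $k$ sampled coordinates — equivalently, the probability that they hash into the same bucket of size $k$ — is at most $p_2^k$. This gives exactly the claimed bound.

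I do not expect any real obstacle here: the lemma is the routine ``far points rarely collide'' half of the standard Indyk--Motwani analysis, and the bound on each coordinate follows directly from the definition of Hamming distance over $\{0,1\}^n$. The only point worth being explicit about is the independence of the coordinate samples inside one bucket (so that the bound multiplies across the $k$ coordinates), since this is the structural assumption on the hash family that makes the inequality tight in the worst case.
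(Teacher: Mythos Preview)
Your proposal is correct and follows essentially the same approach as the paper's proof: bound the single-coordinate agreement probability by $p_2$ using the Hamming-distance hypothesis, then multiply over the $k$ independently sampled coordinates to get $p_2^k$. The paper's proof is simply a terser, two-line version of what you wrote.
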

 
\begin{proof}
Probability that $I_i$ and $I_j$ matches $1$ at some particular bit position   $<  p_2.$
Probability that $I_i$ and $I_j$ matches   at $k$ positions in a bucket of size 
$k  <  {p_2}^k$.
\end{proof}

\begin{lem}\label{lem:LSH}
 Let  $\{I_i\}_{i=1}^m$ be a set of $m$ vectors in $R^n$,  $I_q$ be a given query vector, and  $I_{x^*}~
 (\mbox{with,~}1\leq x^*\leq m)$  \textit{s.t.}  $\Ham(I_{x^*}, I_q)\leq r$. %$\Sim(I_q,I_{x^*})\geq S_0$. 
 If we set our hashing parameters 
 $k=\log_{\frac{1}{{p_2}}}m$, and $L=m^{\rho}\log\left(\frac{1}{\delta}\right)$ $(\mbox{where,~}p_1=1-\frac{r}{n}, p_2=1-\frac{r(1+\epsilon')}{n}, 
 \break  \rho= \frac{\log\frac{1}{{p_1}}}{\log \frac{1}{{p_2}}}\leq \frac{1}{1+\epsilon'})$, 
 then the following two cases are true  with  probability $> 1-\delta:$
 \begin{enumerate}
   \item for some $i\in \{1, . . . , L\}$, $g_i(I_{x^*} ) = g_i (I_q)$; and
  \item  total number of collisions with   $I_{x'}$  \textit{s.t.}  $\Ham(I_{x'}, I_q)>(1+\epsilon')r$ %$\Sim(I_q, I_{x'})<  cS_0$ 
  is at most    $\frac{L}{\delta}$.
 \end{enumerate}
\end{lem}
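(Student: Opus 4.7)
The plan is to prove the two claims independently using the two preceding lemmas (\ref{lem:freBbound2} and \ref{lem:unFreBbound2}) combined with the specific parameter choices for $k$ and $L$, and then combine via a union bound (or treat each conclusion as holding with the stated probability individually, replacing $\delta$ by $\delta/2$ if joint probability is needed).

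For claim 1, I would start from Lemma~\ref{lem:freBbound2}, which guarantees that a pair at Hamming distance $\le r$ collides in at least one of the $L$ buckets with probability at least $1-(1-p_1^k)^L$. The main calculation is simplifying $p_1^k$ under the stated parameter choices: since $k=\log_{1/p_2}m$, we have
\[
p_1^k \;=\; p_1^{\log_{1/p_2}m} \;=\; m^{-\log(1/p_1)/\log(1/p_2)} \;=\; m^{-\rho}.
\]
Substituting $L=m^{\rho}\log(1/\delta)$ and applying the standard inequality $1-x\le e^{-x}$ yields
\[
(1-p_1^k)^L \;=\; (1-m^{-\rho})^{m^{\rho}\log(1/\delta)} \;\le\; e^{-\log(1/\delta)} \;=\; \delta,
\]
so $I_{x^*}$ collides with $I_q$ in at least one of the $L$ tables with probability at least $1-\delta$.

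For claim 2, I would invoke Lemma~\ref{lem:unFreBbound2}: any single item $I_{x'}$ with $\Ham(I_{x'},I_q)>(1+\epsilon')r$ collides with $I_q$ in one fixed bucket with probability at most $p_2^k = p_2^{\log_{1/p_2}m}=1/m$. Summing this bound over the (at most $m$) far items and over $L$ buckets gives an expected total of at most $L$ spurious collisions. Markov's inequality then bounds the probability that the number of such collisions exceeds $L/\delta$ by $\delta$.

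The two conclusions each hold with error $\delta$; a union bound establishes that both hold simultaneously with probability $\ge 1-2\delta$, and the stated form of the lemma is obtained by rescaling (or by interpreting the two conclusions as independent statements, each with error $\delta$). The bound $\rho \le 1/(1+\epsilon')$ is a routine consequence of comparing $\log(1/p_1)$ and $\log(1/p_2)$ using $p_1=1-r/n$, $p_2=1-(1+\epsilon')r/n$ and the concavity of $\log(1-x)$. I do not expect a real obstacle here; the only mild care-point is bookkeeping in the exponent manipulation that turns $k=\log_{1/p_2}m$ into $p_1^k=m^{-\rho}$, and ensuring that the union-bound cost is handled consistently with the statement's $\delta$.
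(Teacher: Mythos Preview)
Your proposal is correct and follows essentially the same approach as the paper: the paper also derives claim~1 from Lemma~\ref{lem:freBbound2} via the identity $p_1^k=m^{-\rho}$ and the inequality $(1-m^{-\rho})^{m^\rho\log(1/\delta)}\le e^{-\log(1/\delta)}$, derives claim~2 from Lemma~\ref{lem:unFreBbound2} plus linearity of expectation and Markov, and records the bound on $\rho$ as a separate remark. The paper, like you, proves each conclusion with failure probability $\le\delta$ individually and does not explicitly union-bound them; your observation about the $2\delta$ versus $\delta$ bookkeeping is a fair caveat that the paper itself leaves implicit.
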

 
\begin{proof}
Consider the first case, by Lemma~\ref{lem:freBbound2}, we have the following:\\%the definition of LSH we have the following:\\
    $$\Pr[\exists i: g_i (I_{x^*} ) = g_i(I_q)] \geq 1-(1-{p_1}^k)^L.$$
If we choose $k=\log_{\frac{1}{{p_2}}}m$, we get
${p_1}^k={p_1}^{\log_{\frac{1}{{p_2}}}m}=m^{-\frac{\log\frac{1}{{p_1}}}{\log \frac{1}{{p_2}}}}$.
Let us denote $\rho= \frac{\log\frac{1}{{p_1}}}{\log \frac{1}{{p_2}}}$.
Then,   $\Pr[\exists i: g_i(I_{x^*} ) = g_i (I_q)] \geq  1-(1-m^{-\rho})^L.$ %for some $i$ is
Now, if we set $L=m^{\rho}\log\left(\frac{1}{\delta}\right)$, then the required probability is 
$1-(1-m^{-\rho})^{m^{\rho}\log\left(\frac{1}{\delta}\right)} \geq 1-{\frac{1}{e}}^{\log\left(\frac{1}{\delta}\right)} > 1-\delta$.

Now, let us consider the case $2$. Let  $I_{x'}$ be an item such that  
$\Ham(I_q, I_{x'})>  r(1+\epsilon')$. Then by Lemma~\ref{lem:unFreBbound2}, we have the following:
$$\Pr[g_i(I_q)=g_i(I_{x'})] \leq {p_2}^k={p_2}^{\log_{{\frac{1}{{p_2}}}}m}=\frac{1}{m}  ~~(\mbox{as we choosed}~ k=\log_{{\frac{1}{{p_2}}}}m). $$ 
%(as we choosed $k=\log_{{\frac{1}{\mathcal{Q}}}}m$.)\\
Thus, the expected number of collisions for a particular $i$ is at most $1$, and the expected 
total number of collisions is at most $L$ (by linearity of expectation).
Now, by Markov's inequality 
$\Pr[\mbox{Number of~} I_{x'} \mbox{~ which are colliding with~} I_q >\frac{L}{\delta} ] < \frac{L}{\left(\frac{L}{\delta}\right)}=\delta$.
Further, $\rho= \frac{\log\frac{1}{{p_1}}}{\log \frac{1}{{p_2}}}=\frac{n-r}{n-(1+\epsilon')r}\leq \frac{1}{1+\epsilon'}~\mbox{~(after simplification)}.$
\end{proof}
% \begin{lem}[\ref{lem:hamPadding}] 
%For two itemsets $I_x$ and $I_y$, $\Ham(P(I_x), Q(I_y))=2(\alpha_l n-|I_x, I_y|).$
%\end{lem}
\lemhamPadding*
\begin{proof}
It is easy to verify that with this mapping $|P(I_x), Q(I_y)|=|I_x, I_y|$. 
Let $\Ham(I_x, I_y)$ denote the hamming distance between items $I_x$ and $I_y$. Then, 
 $ \Ham(P(I_x), Q(I_y))=|P(I_x)|+|Q(I_y)|-2|P(I_x), Q(I_y)|= \alpha_l n-|I_x|+|I_x|+\alpha_l n-|I_y|+|I_y|-2|I_x, I_y|
 =2(\alpha_l n-|I_x, I_y|).$
\end{proof}

\lemoverheadHamming*
%\begin{lem}[\ref{lem:overheadHamming}]
% %$\overhead \ll \sav$ for Theorem~\ref{thm:paper_main} {\em w.r.t.} Algorithm~\ref{algorithm:algoLSHviaHamming}.
% At any level $l$, the overhead of reading \underline{\phantom{specify}}
%transactions is very small compared to expected savings
%given by $\E[\sav] \ge (n-\phi)(2(c_{l+1}-m_{l+1})-L)$.
%\end{lem}
\begin{proof}
First, we show that for any query item $I_q$,
 Algorithm~\ref{algorithm:algoLSHviaHamming} correctly outputs $\FI(I_q,
 \theta,  \varepsilon, \delta)$ for any query $I_q \in \D_l$.
%Note that Algorithm~\ref{algorithm:algoLSHviaHamming} satisfy following two cases with probability at least $1-\delta$
%\begin{enumerate}
%   \item If there is an item $I_{x^*}$  \textit{s.t.} $|I_{x^*}, I_q|\geq \theta n$, then there is $i\in \{1, . . . , L\}$, $g_i(P(I_{x^*})) = g_i (Q(I_q))$,
%  \item the total number of collisions with items $I_{x'}$  such that  $|I_{x'}, I_q|< (1-\varepsilon)\theta n$ is at most 
%   $\frac{L}{\delta}$.
% \end{enumerate}
 Now, if there is an item $I_{x^*}$ such that $|I_{x^*}, I_q|\geq \theta n$, then $\Ham(P(I_{x^*}), Q(I_q)) \leq 2(\alpha_l -\theta)n$ (by Lemma~\ref{lem:hamPadding}). 
 Let $p_1$ be the probability that $P(I_{x^*})$ and  $Q(I_q)$ matches at some particular bit position, then 
 $p_1\geq 1-\frac{2(\alpha_l n-\theta n)}{n+2\alpha_l n}=\frac{1+2\theta}{1+2\alpha_l }.$
 Similarly,  if there is an item $I_{x'}$ such that $|I_{x'}, I_q|\leq (1-\varepsilon)\theta n$, 
 then $\Ham(P(I_{x'}), Q(I_q))  \geq 2(\alpha_l -(1-\varepsilon)\theta)n$. Let $p_2$ be the probability that 
 $P(I_{x'})$ and  $Q(I_q)$ matches at some particular bit position, then 
 $p_2\leq 1-\frac{2(\alpha_l -(1-\varepsilon)\theta)n}{(1+2\alpha_l) n}=\frac{1+2(1-\varepsilon)\theta}{1+2\alpha_l }.$
 
 Now, as we have set $k=\log_{\frac{1}{p_2}}=\log_{\left(\frac{1+2\alpha_l }{1+2(1-\varepsilon)\theta}\right)}m_l;$
 and $L=m_l^{\rho}\log{\frac{1}{\delta}}, \mbox{~where}~ \rho=\frac{\log\frac{1}{{p_1}}}{\log \frac{1}{{p_2}}}=\frac{\log\frac{1+2\theta}{{1+2\alpha_l }}}{\log \frac{1+2(1-\varepsilon)\theta}{{1+2\alpha_l }}}%\leq \frac{\alpha n-\theta n}{\alpha n-(1-\varepsilon)\theta n}
 =\frac{\alpha_l-\theta }{\alpha_l-(1-\varepsilon)\theta }~\mbox{~(after simplification)};$
 the proof easily follows from Lemma~\ref{lem:LSH}.
%\qed

 The  space required for hashing an itemset $I_q$ is 
$\Space= O(kL)=\tilde{O}\left( {m_l}^{\rho}  \right)=m_l^{o(1)}$, 
where $\rho=\frac{\alpha_l-\theta }{\alpha_l-(1-\varepsilon)\theta } = o(1)$. 
%Total space required for creating hash table is $m_l\tilde{O}\left( {m_l}^{\rho}  \right)=\tilde{O}\left( {m_l}^{1+\rho}  \right)$.
Time $\tau$ require for hashing  $I_i$ is also
$O(kL)=\tilde{O}\left( {m_l}^{\rho}  \right)=m_l^{o(1)}$. 
Thus, total time and space overhead is $ {m_l}^{1+o(1)}$, which
 immediately proves the required space and time overhead.
 
The number of bits of any vector required by this LSH-function is
$\phi = O(kL) = {m_l}^{o(1)}$.
We know that $TN$ and $FP$ for an itemset $I_q \in \D_l$ are related by $TN(I_q) = r(I_q)
- FP(I_q)$, and  $\sum_{I_q} r(I_q) = 2(c_{l+1}-m_{l+1})$. So, $\E[\sum TN(I_q)] = \sum r(I_q) -
\E[\sum FP(I_q)]$. From Lemma~\ref{lem:LSH},
$\E[FP(I_q)] \le L$. Combining these facts, we get $\E[\sum TN(I_q)]
\ge 2(c_{l+1}-m_{l+1}) - m_l L= 2(c_{l+1}-m_{l+1}) - m_l m_l^{\rho}=2(c_{l+1}-m_{l+1}) -  m_l^{1+o(1)}$.
Now, using the formula for expected savings from Section~\ref{sec:LSH-Apriori},
\begin{align*}
 \E[\sav(l+1)] &\ge (n-O(kL))\left(2(c_{l+1}-m_{l+1}) -  m_l^{1+o(1)}\right)\\
 &=\left(n-  {m_l}^{o(1)} \right) (2(c_{l+1}-m_{l+1}) -  m_l^{1+o(1)} )\\ 
&\ge \big(n-o(m_l)\big)\big((c_{l+1}-2m_{l+1}) + (c_{l+1}-o(m_l^2))\big).
\end{align*}

\end{proof}

\lemminhashPadding*
%\begin{lem}[\ref{lem:minhashPadding}]
%For two itemsets $I_x$ and $I_y$, $ \JS(P(I_x), Q(I_y))=\frac{|I_x, I_y|}{2\alpha_l n-|I_x, I_y|}.$
%\end{lem}
\begin{proof}
The Jaccard Similarity between items $P(I_x)$  and $Q(I_y)$ is as follows:
 $ \JS(P(I_x), Q(I_y))=\frac{|P(I_x)\cap Q(I_y)|}{|P(I_x)\cup Q(I_y)|}=\frac{|P(I_x), Q(I_y)|}{|P(I_x)|+|Q(I_y)|-|P(I_x), Q(I_y)|}\\
  =\frac{|I_x, I_y|}{\alpha_l n-|I_x|+|I_x|+\alpha_l n-|I_y|+|I_y|-|I_x, I_y|} =\frac{|I_x, I_y|}{2\alpha_l n-|I_x, I_y|}.$
\end{proof}
 
\lemMinhashlemma*
 \begin{proof}
 Now, if there is an item $I_{x^*}$ such that $|I_{x^*}, I_q|\geq \theta n$, 
 then $\JS(P(I_{x^*}), Q(I_q))\geq \frac{\theta n }{(2\alpha_l-\theta) n}$ (by Lemma~\ref{lem:minhashPadding}).  
 As we  set $\omega=\frac{(1-\varepsilon)\theta  }{2\alpha_l-(1-\varepsilon)\theta }$ in Algorithm~\ref{algorithm:algoLSHviaMinHash},
 then  by Theorem~\ref{theorem:cohen} we have
 $\hat{\JS}(P(I_{x^*}), Q(I_q)) \geq \frac{(1-\epsilon)\theta n }{(2\alpha_l-\theta) n}$, with probability at least $1-\delta$. 
 
 Similarly, if there is an item $I_{x'}$ such that $|I_{x'}, I_q|< (1-\varepsilon)\theta n$, 
 then \newline $\JS(P(I_{x'}), Q(I_q))< \frac{(1-\varepsilon)\theta n }{2(\alpha_l-(1-\varepsilon)\theta) n}=\frac{(1-\varepsilon)\theta  }{2\alpha_l-(1-\varepsilon)\theta }$ 
 (by Lemma~\ref{lem:minhashPadding}).  %As we  set $c=%\frac{(1-\varepsilon)\theta n }{2M-(1-\varepsilon)\theta n}=
 %\frac{(1-\varepsilon)\theta  }{2\alpha_l-(1-\varepsilon)\theta }$  in Algorithm~\ref{algorithm:algoLSHviaMinHash}, 
 Then by Theorem~\ref{theorem:cohen}, we have
 $\hat{\JS}(P(I_{x'}), Q(I_q)) < %\frac{(1+\epsilon)(1-\varepsilon)\theta n }{2M-(1-\varepsilon)\theta n}=
 \frac{(1+\epsilon)(1-\varepsilon)\theta}{2\alpha_l-(1-\varepsilon)\theta }$, with 
 probability at least $1-\delta$.% (by Theorem~\ref{theorem:cohen}).
 
 We need to set Minhash parameter $\epsilon$ such that $\hat{\JS}(P(I_a), Q(I_q))\geq \frac{(1-\epsilon)\theta }{2\alpha_l-\theta}
 \geq \frac{(1+\epsilon)(1-\varepsilon)\theta  }{2\alpha_l-(1-\varepsilon)\theta}$. 
 This gives, $\epsilon< \frac{\alpha_l\varepsilon}{\alpha_l + (\alpha_l-\theta)(1-\varepsilon)}$,
 %This gives, $\epsilon< \frac{\varepsilon(\alpha-2\theta)}{(2\alpha-\theta)-\varepsilon(\alpha-\theta)}$,
 which ensures $|I_q, I_a|\geq(1-\varepsilon)\theta n$ with probability at least $1-\delta.$
 
  The  space required for hashing an itemset $I_i$ $(\mbox{for~} 1\leq i\leq m_l)$ is 
$\Space=O(\lambda)$. Where, $\lambda \geq \frac{2}{\omega\epsilon^2}\log{\frac{1}{\delta}}$ and $\epsilon =
\frac{\alpha_l\varepsilon}{\alpha_l + (\alpha_l-\theta)(1-\varepsilon)}$.
Total space   required for storing hash table is $O(m_l\lambda).$
 Creating hash table require one pass over $\D_l$, then preprocessing time overhead is $O(nm_l)$. 
We perform query on the hash table,  query time overhead is $O(\lambda m_l).$ Thus, total time overhead 
$\overhead(l+1)=O((n+\lambda)m_l).$
 
Now, if an itemset $I_a$ is infrequent with $I_q$, then $\Pr[I_a \mbox{~is not reported}] \geq 1-\delta.$
As there are $(c_{l+1}-m_{l+1})$ number of infrequent itemsets at level $l+1$, %Let Zi = I(i-th infrequent item is not reported) for i=1 ... (c_l - m_l)
then, expected number of  infrequent items that are not reported $\E[TN] \geq (1-\delta)(c_{l+1}-m_{l+1}).$
Therefore, $\E[\sav(l+1)] = (n-\lambda)\E[TN] \geq (n-\lambda)(1-\delta)(c_{l+1}-m_{l+1}).$
 
\end{proof}
 
\newcomment{
\begin{comment}
We need the following two results from \cite{Pagh16} to prove the key lemma for
CoveringLSH-based LSH-Apriori.

\begin{thm}[Theorem $4.1$ of \cite{Pagh16}]\label{thm:paghthm}
     For a randomly chosen $\phi$, a hash family $\mathcal{H}_{\mathcal{A}}$ defined above and distinct $I_x, I_q \in \{0, 1\}^n:$
 \begin{itemize}
  \item $\Ham\big(P(I_x), Q(I_q)\big)\leq 2(\alpha_l-\theta)n \Rightarrow
      \Pr\left[\exists h\in \mathcal{H}_{\mathcal{A}}:
      h\big(P(I_x)\big)=h\big(Q(I_q)\big)\right]=1$
  %\item $\mathbb{E}\left[h \in \mathcal{H}_{\mathcal{A}(m ,p)}| h(x)=h(y)|\right]<(1-(1-2^{-t})q/b)^{|x\oplus y|}b2^{\frac{trq}{b}+1}.$
\item $\mathbb{E}\big[h \in \mathcal{H}_{\mathcal{A}}|
    h(P(I_x))=h(Q(I_q))|\big]<\big(1-(1-2^{-t})\big)^{\Ham(P(I_x), Q(I_q))}2^{ 2(\alpha_l-\theta)tn+1}.$
  \end{itemize}
 \end{thm}
 
\begin{cor}[Corollary $4.1$ of \cite{Pagh16}]\label{cor:coveringLSHCor}
 For all $\theta, \varepsilon \in(0, 1]$, 
 there is randomized covering Hamming projection family
 $ \mathcal{H}_{\mathcal{A'}}$ \textit{s.t.} for every $I_q\in \{0, 1\}^n$, and 
 %we have
 %\begin{itemize}
  %\item 
  %$|\mathcal{A}|\leq2^{{2(\alpha_l-\theta)n}+1}m_l^{\frac{1}{c}}$   and $\psi_{\mathcal{A}}\leq2^{{2(\alpha_l-\theta)n}+1}m_l^{\frac{1}{c}}$. 
  %\item If $\ln \left(m_l/({2(\alpha_l-\theta)n}c)\right)+\epsilon\in \mathbb{N}$ for $\epsilon>0$, then $|\mathcal{A}_1|\leq 2^{ {2(\alpha_l-\theta)n}\epsilon+1}m_l^{\frac{1}{c}}$ and $\psi_{\mathcal{A}_1}<2^{ {2(\alpha_l-\theta)n}\epsilon+1}m_l^{\frac{1}{c}}$
 %\item If ${(\alpha_l-\theta)n}> \lfloor \ln(m_l) /2c\rfloor$, then $|\mathcal{A}_2|\leq {16(\alpha_l-\theta)n} m_l^{\ln(4)/c}$ and $\psi_{\mathcal{A}_2}<{16(\alpha_l-\theta)n} m_l^{\ln(4)/c}$
 %\end{itemize}
 %The above can be improve for the case when 
 %\\$\frac{\ln  m_l}{2(\alpha_l-(1-\varepsilon)\theta)}+\epsilon\in \mathbb{N}$ for any $\epsilon\in(0, 1)$, 
 we have %$|\mathcal{A}|\leq 2^{ {2(\alpha_l-\theta)n}\epsilon+1}m_l^{\frac{1}{c}}$ and
 $|\mathcal{A'}|, \psi<2^{ {2(\alpha_l-\theta)n}\epsilon+1}m_l^{\frac{1}{c}}$.
 \end{cor}
 \end{comment}
 }
 \lemcoveringlemma*
%   \begin{lem}[\ref{lem:coveringlemma}]
% Algorithm~\ref{algorithm:coveringLSH}   satisfies Theorem~\ref{lem:paper_main}.
%\end{lem}
 \begin{proof}
%Proof of this lemma easily follows using arguments similar to the proof of Lemma~\ref{lem:klHamming}, and due to the Corollary~\ref{cor:coveringLSHCor}.
By Theorem~\ref{thm:coveringLSHthm}, and our choice of hash function $\mathcal{H}_{A'}$, 
any pair of similar itemset will surely collide by our hash function, and will not get missed by the algorithm. 
Moreover,   {\em false positives} will be filter out by the algorithm in Line~\ref{line:compute-candidates}b of  
Algorithm~\ref{algorithm:coveringLSH}. Thus, our algorithm outputs all $\theta$-frequent itemsets     and only $\theta$-frequent itemsets.

The  space required for hashing an itemset $I_q$ is 
$\Space= |\mathcal{H}_{A'}|$. 
%where $|\mathcal{A}|\leq 2^{{2(\alpha_l-\theta)n}+1}m_l^{\frac{1}{c}}, \mbox{and~} c=\frac{\alpha_l-(1-\varepsilon)\theta}{\alpha_l-\theta}$. 
Total space required for creating hash table is $O(m_l|\mathcal{H}_{A'}|)
=O\left(m_l  2^{{\theta'}\epsilon+1}m_l^{\frac{1}{c}} \right)=
O\left(m_l  m_l^{\frac{\epsilon}{ct}}m_l^{\frac{1}{c}}  \right)\\
=O\left(  m_l^{1+\frac{t+\epsilon}{ct}}\right)=O(m_l^{1+\nu})$.  %=O\left(  m_l^{1+\nu}\right)$ for $\nu=\frac{t+\epsilon}{ct}$ .
%\footnote{ Notice that $c \in (1,2)$, and since $\epsilon < 1$ and $t =\Omega(\ln(m_l))$, we get $\frac{t+\epsilon}{ct} = o(1)$}.% from Corollary~\ref{cor:coveringLSHCor}.
 Time $\tau$ require for hashing  $I_i$ is also $|\mathcal{H}_{A'}|$. 
Thus, total time overhead required (including both preprocessing and querying) is $\overhead(l+1)=O(m_l|\mathcal{H}_{A'}|)=
O(m_l^{1+\nu})$,  which   proves the required space and time overhead.
 
The number of bits of any item required by  our hash function is
$\phi =  \frac{\log m_l}{c}+1$.
We know that $TN$ and $FP$ for an itemset $I_q \in \D_l$ are related by $TN(I_q) = r(I_q)
- FP(I_q)$, and  $\sum_{I_q} r(I_q) = 2(c_{l+1}-m_{l+1})$. So, $\E[\sum TN(I_q)] = \sum r(I_q) -
\E[\sum FP(I_q)]$. From Theorem~\ref{thm:coveringLSHthm},
$\E[FP(I_q)] \le \psi$. Then, we get $\E[\sum TN(I_q)]
\ge 2(c_{l+1}-m_{l+1}) - m_l \psi$. %= 2(c_{l+1}-m_{l+1}) - 2^{{2(\alpha_l-\theta)n}+1}m_l^{1+\frac{1}{c}}$.
 
 As  expected savings is  ${\sav(LSH,l+1) = (n-\phi) \times \sum_{I_q} TN(I_q)}$. We have, 
\begin{align*}
 \E[\sav(l+1)] &\ge \left(n-\frac{\log m_l}{c}-1\right)\left(2(c_{l+1}-m_{l+1}) -  m_l \psi\right).\\
   &\ge \left(n-\frac{\log m_l}{c}-1\right)\left(2(c_{l+1}-m_{l+1}) -  m_l 2^{{\theta'}\epsilon+1}m_l^{\frac{1}{c}} \right).\\
&\ge 2\left(n-\frac{\log m_l}{c}-1\right)\left(2(c_{l+1}-m_{l+1}) -   m_l^{1+\frac{t+\epsilon}{ct}}\right).\\
&= 2\left(n-\frac{\log m_l}{c}-1\right)\left(2(c_{l+1}-m_{l+1}) -   (m_l^{1+\nu})\right).\\
   % &=\left(n-  {m_l}^{o(1)} \right) (2(c_{l+1}-m_{l+1}) -  m_l^{1+o(1)} )\\ 
%&\ge \big(n-o(m_l)\big)\big((c_{l+1}-2m_{l+1}) + (c_{l+1}-o(m_l^2))\big).
\end{align*}

We end with a quick proof that the following are sufficient to ensure
overhead is less than expected savings. $c_{l+1} \in \{ \omega(m_l^2),
\omega(m_{l+1}^2) \}$, $\frac{2^{n}}{5} > m_l > 2^{n/2}$, 
$\epsilon < 0.25$ and $\alpha \approx \theta$. Note that, $2^{cn} > 2^n$ and
$5m_l > 2^{c\cdot c^{1/d}}m_l$ for any $d > 0$. These imply,
\begin{align*}
    nc & > \log(m_l) + c\cdot c^{1/d}\\
    n - c^{1/d} & > \frac{\log(m_l)}{c} \\
    n - 1 &> \frac{\log(m_l)}{c}~~~~\mbox{since, $c^{1/d} > 1$}
\end{align*}

Furthermore, our conditions imply that $4 \epsilon (\alpha_l - \theta) c <
c-1$. This implies, 
\begin{align*}
    \frac{2\epsilon (\alpha_l - \theta) c}{c-1} & < 1/2\\
    n-2 > n/2 &> \frac{2\epsilon (\alpha_l - \theta) cn}{c-1}\\
    \log(m) &> \frac{2\epsilon (\alpha_l - \theta) cn}{c-1}\\
    1 = \frac{1}{c} + \frac{c-1}{c} &> \frac{1}{c} + \frac{2\epsilon (\alpha_l -
\theta) n}{\log(m)} = \frac{1}{c} + \frac{\epsilon}{ct} = \nu
\end{align*}

 \end{proof}

\end{document}